\newcommand{\fuera}[1]{}
\newcolumntype{Y}{>{\centering\arraybackslash}X}
\definecolor{light-gray}{gray}{0.75}
\DeclareMathOperator*{\argmin}{arg\,min}
\begin{document}

\title{Group Centrality for Semantic Networks: \\
a SWOT analysis featuring Random Walks
}
%
\author{Camilo Garrido \and Ricardo Mora \and Claudio Gutierrez}
\institute{
Department of Computer Science, Universidad de Chile, Chile\\
\email{\{cgarrido,rmora,cgutierr\}@dcc.uchile.cl}
}

\maketitle              

\begin{abstract}
Group centrality is an extension of the classical notion of
centrality for individuals, to make it applicable to sets of them. We
perform a SWOT (strengths, weaknesses, opportunities and threats) analysis of the use of group centrality in semantic
networks, for different centrality notions: degree,  closeness, betweenness, giving
prominence to random walks. Among our main results stand out the
relevance and NP-hardness of the problem of finding the most central set in a semantic network
for an specific centrality measure.
\end{abstract}

\section{Introduction}


 The notion of centrality in graphs has been thoroughly studied since
it was introduced for social networks, being
also applied in other contexts ranging from physics to information
retrieval \cite{Newman}
, but as
Freeman writes for social networks \cite{Freeman},
``there is certainly no unanimity on exactly what
centrality is or on its conceptual foundations, and there is very
little agreement on the proper procedure for its measurement.''
Over the years, a great variety of measures of centrality have been
proposed, most of them pointing to the problem of finding the most
``central'' (regarding the notion and/or the measure) node in a
network \cite{goodcita}.

 Most of classic work on centrality studies what we could call
``individual'' centrality: what is the most central node in a
network for a certain measure. This idea can be extended to a
group: Given an integer $k> 1$,
what is the most central set $S$ of nodes, of size $k$?
This is known as {\em group centrality} and was introduced
by Everett and Borgatti~\cite{everett} for social networks
in order to make individual measures of centrality work with groups of elements.

  In this paper we study the application of the idea of group centrality
to semantic networks. To the best of our knowledge, this link has not
been explored before. It can help to shed light on the idea
of a ``most significative'' group of concepts in a semantic network, which
in the age of big (semantic) data, seems to be indispensable to make sense
of notions like summary, table of contents,
compendium, etc. on huge semantic network datasets.

Our goal in this paper is to show the robustness of this application, develop its properties, and present its opportunities,
scope and possibilities and limitations.

 We perform the study based on some classical  centrality
measures like degree, closeness and betweenness \cite{Freeman}.\footnote{
Degree centrality selects the node with higher degree; closeness
indicates the node that minimizes the distance from all other nodes in the
network to it; Betweenness selects the node over which pass the higher number
of shortest paths between each pair of nodes. We will define them
formally in Section 3.
}
We present in more detail random walk centrality which is a measure
 based on the notion
of hitting time (informally: the expected length of a random walk).
This idea has been used with success in
 modeling semantic relatedness. The intuition is that the ``jump'' from
one concept to another is better modeled through random walks rather
 than using deterministic movements.  This approach has been explored in
human memory \cite{HumanMemory}, lexical relations \cite{LexicalSemantic}
and Wikipedia corpus \cite{WikiWalk}.


\begin{figure}[!h]
    \centering
    \begin{subfigure}[b]{0.43\linewidth}        

\resizebox{5.5cm}{!}{%
\begin{tikzpicture}[scale=0.9, every node/.style={scale=1.25}]
\usetikzlibrary{arrows}
\usetikzlibrary{shapes}
\tikzstyle{every node}=[draw=black, ellipse,align=center]
\node (antlers) at (-3,3.5){antlers};
\node (deer) at (-0.6,3){deer};
\node (hooves) at (2.5,4){hooves};
\node (mammal) at (-4,2.5){mammal};
\node (dog) at (-2,1.5){dog};
\node (hair) at (-4.25,0.5){hair};
\node (animal) at (-0.5,1.8){animal};
\node (blood) at (-4,-1){blood};
\node (red) at (-3.5,-2){red};
\node (rose) at (-0.5,-2){rose};
\node (color) at (-4,-4.5){color};
\node (green) at (-0.8,-5.5){green};
\node (flower) at (-1,-3){flower};
\node (daisy) at (-1.2,-4.5){daisy};
\node (plant) at (2.5,-3){plant};
\node (tree) at (3.4,-0.1){tree};
\node (cottonwood) at (4.5,-4.2){cottonwood};
\node (leaves) at (4.7,0.5){leaves};
\node (robin) at (-1.5,-1){robin};
\node (bird) at (0,-0.5){bird};
\node (bat) at (-0.8,0.75){bat};
\node (livingthing) at (2,2){livingthing};
\node (frog) at (4.25,4){frog};
\node (chicken) at (1.5,0.5){chicken};
\node (feathers) at (1.5,-2){feathers};
\foreach \from/\to in {antlers/deer,hooves/deer,animal/deer,mammal/animal,dog/animal,hair/dog,animal/livingthing,blood/animal,blood/red,robin/red,rose/red,color/red,
											 color/green,flower/rose,daisy/flower,flower/plant,plant/tree,cottonwood/tree,leaves/tree,robin/bird,bird/feathers,bird/bat,bat/livingthing,frog/livingthing,
											 chicken/livingthing,feathers/chicken,green/plant,livingthing/tree}
 \draw (\from) -- (\to);
\end{tikzpicture}
}
				\label{pathh}
    \end{subfigure}
    \begin{subfigure}[b]{0.45\linewidth}        
        \centering
\resizebox{6.5cm}{!}{%
\begin{tikzpicture}[scale=0.9, every node/.style={scale=1.25}]
\usetikzlibrary{arrows}
\usetikzlibrary{shapes}
\tikzstyle{every node}=[draw=black, ellipse,align=center]
\node (antlers) at (-4,3){antlers};
\node (deer) at (-2,3){deer};
\node (hooves) at (-3,4){hooves};
\node (mammal) at (0,2){mammal};
\node (dog) at (1,3){dog};
\node (hair) at (0,4){hair};
\node (animal) at (2,4){animal};
\node (blood) at (2,1){blood};
\node (red) at (3,0){red};
\node (rose) at (-3.2,-4){rose};
\node (color) at (3,-1){color};
\node (green) at (1.5,-2.5){green};
\node (flower) at (-3,-3){flower};
\node (daisy) at (0,-4){daisy};
\node (plant) at (-1,-2){plant};
\node (tree) at (-2,0){tree};
\node (cottonwood) at (-4.5,0){cottonwood};
\node (leaves) at (-2,-1){leaves};
\node (robin) at (6,0){robin};
\node (bird) at (4,2){bird};
\node (bat) at (1,0){bat};
\node (livingthing) at (-1,1){livingthing};
\node (frog) at (4.25,4){frog};
\node (chicken) at (3,3){chicken};
\node (feathers) at (5,1){feathers};
\foreach \from/\to in {robin/feathers,feathers/bird,bird/chicken,chicken/animal,animal/frog,animal/dog,dog/mammal,hair/mammal,bat/mammal,blood/mammal,blood/red,red/color,green/color,
											 plant/daisy,daisy/flower,flower/rose,plant/leaves,leaves/tree,tree/cottonwood,tree/livingthing,livingthing/mammal,mammal/deer,livingthing/deer,deer/hooves,deer/antlers,
											 flower/plant,plant/green}
\draw (\from) -- (\to);
\end{tikzpicture}
}
				\label{axiom1}
    \end{subfigure}
		\begin{tabular}{crcl} \hline
$k=1$ & $\{$ livingthing $\}$ & -- & $\{$ mammal $\}$ \\
$k=2$ & $\{$ red, livingthing  $\}$& -- &   $\{$mammal, plant $\}$ \\
$k=3$ & $\{$ bird, plant, animal $\}$ & --  & $\{$ mammal, plant, bird $\}$ \\
$k=4$ &  $\{$ plant, red, animal, livingthing $\}$ & -- & $\{$ mammal, plant, bird, tree $\}$ \\\hline
 \end{tabular}
  \caption{Networks built by novices (left)  and experts (right) over
    the same set of 25 concepts. In the table, for each $k$ it is shown the most
    central set of $k$ concepts of the novice (left) and the expert  (right) networks.}
    \label{fig:pathclique}
\end{figure}
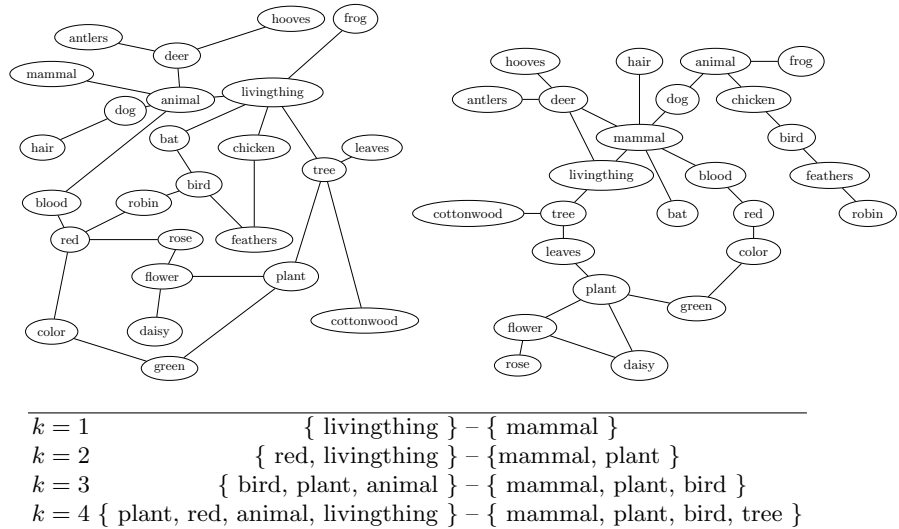
\medskip

   Let us consider the examples shown in Fig. 1. Schvaneveldt and
social scientists~\cite{proximity} made a study of the process of
construction of semantic networks. They chose 25 natural concepts and
devised a method to allow communities to establish semantic
relationships among them.
The networks were built using two different communities:
on the left by novices (psychology students) and on the right by
domain experts (biologists).
 What are the most central concepts here?
First, let us point that
the four centrality measures mentioned previously, show remarkable similarity in their outputs
(these measures do not consider labels, only the structure of the graph).
On the novices' network, the four measures indicate ``livingthing''.
As for the biologists' network they indicate ``mammal''.\footnote{
In the experiments, the biologists were asked separately to tell the
most representative concept in the list, and they chose ``mammal''.
}
Also, the four of them behaved remarkable similar when searching for central groups of different sizes, as shown in Fig. 1.

  What do these examples tell? First of all, they give insights on the
possible relevance and use that group centrality could have for semantic
networks. The contribution of this paper is a study of the assumptions
that lie under this optimistically/naive hypothesis, that is, an
analysis of the strengths and opportunities and weaknesses and obstacles
(known as SWOT analysis) of
applying this idea to semantic networks. (In the related work section we
will go over the work that has been done on ranking semantic networks,
that touches some facets of this problem).

The concrete contributions of this paper are:
\begin{enumerate}
\item Complexity analysis.
We show that for the four previously mentioned centrality measures, the problem of
finding a set with optimal centrality score is NP-hard (and its associated decision problem
is NP-complete).

\item  We present a small-scale study of samples of semantic networks
 --because of the hardness of the computations and the
difficulties to get a human evaluation of what
is ``central'' in a huge network, a challenge by itself--
to  show tendencies and illustrate the potential and robustness of
the idea of group centrality.

\item We develop the notion of random-walk centrality, and prove
several theoretical properties for it. In particular, we show that testing
random-walk group centrality can be done in polynomial time.
\end{enumerate}

The paper is organized as follows. In Section 2 we present the
main arguments of the SWOT analysis and related work on the subject.
In Section 3 we show the study of samples of some real semantic networks. In section 4 we include the
proofs and support for the theoretical arguments and in section 5 we present the conclusions.

\section{Group Centrality in Social Networks}

Let us begin by stating formally the notions to be studied.


\paragraph{Basic graph notions.}

An undirected and simple graph is a pair $G=(V,E)$ where $E \subseteq [V]^2$, and $[V]^2$ is the set of all 2-elements subsets from $V$. The elements of $V$ are the \emph{vertices} of $G$, and the ones from $E$ are its \emph{edges}. When necessary, we will use the notation $V(G)$ and $E(G)$ for those sets. From now on, an element $\{u,v\} \in E$ will be denoted simply by $uv$.


A vertex $u$ is said to be a \emph{neighbor} of another vertex $v$, when $uv \in E$. Note that the definition of $E$ implies that $v$ is also a neighbor of $u$. The set of neighbors of $v$ will be denoted by $N_G(v)$. The \emph{degree} of $v$, $d_G(v)$ is the size of $N_G(v)$.


A graph $P_n=\big(\{v_0,v_1,...,v_n\},\{v_0v_1,v_1v_2,...,v_{n-1}v_n\}\big)$ with $n \geq 0$, where all $v_i$ are distinct is called a \emph{path}, and the number of edges in it is its \emph{length}. A {\em cycle} is a special type of path such that $v_0=v_n$. 
A path $P_n$ in $G$, with $n \geq 1$ such that $v_0=u$ and $v_n=v$, is called a $u$-$v\ path$. Also $G$ is said to be \emph{connected} if for all distinct $u,v \in V$ a $u$-$v$ path exists in $G$. A \emph{connected component} of $G$ is a maximally connected subgraph $H$.

The \emph{distance} between $u$ and $v$ \big(denoted by $d_G(u,v)$ or just $d(u,v)$\big) is the length of the shortest $u$-$v$ path in $G$. For $S \subseteq V$, $d(u,S):=\min_{v \in S}d(u,v)$.

In what follows, all our graphs will considered to be simple, undirected and connected.


\paragraph{Group Centrality}



The centrality measures that we study were designed to work with single vertices \cite{Freeman} rather than groups of them, thus we use their natural extension to sets as proposed in \cite{everett}.

Let $G=(V,E)$ be a graph and $S\subset V$. We denote by $C_j(S)$ for
$j \in \{d,c, bc\}$ the  \emph{group centrality} according to
the centrality measure $C_j$.

\emph{Group degree centrality} $C_d(S)$ counts the number of vertices not in $S$ that are connected to some vertex in $S$. Multiple ties to the same vertex are counted only once.
\emph{Group closeness} $C_c(S)$ studies the value of the sum of the distances from $S$ to all vertices not in it. Note that for group degree, the higher the value of $C_d(S)$, the more central the set is. Whereas for group closeness the reciprocate holds. Formally, they are defined as follows:
\begin{align*}
C_{d}(S)=\frac{|\{v \in V\setminus S, N(v)\cap S \neq \phi \}|}{|V\setminus S|}\ ,\hspace{1cm} C_c(S)=\frac{\sum_{v\in V\setminus S}d(v,S)}{|V\setminus S|}\ .
\end{align*}

\emph{Group betweenness centrality} $C_{bc}(S)$ indicates the proportion of geodesics connecting pairs of vertices (that are not in $S$), that pass through $S$
\begin{align*}
C_{bc}(S)=\frac{2BC(S)}{|V\setminus S|(|V\setminus S|-1)},\ \text{ with }\ BC(S)=\sum_{\substack{u<v\\u,v\notin S}}\frac{\sigma_{uv}(S)}{\sigma_{uv}},
\end{align*}
where $\sigma_{uv}$ is the total number of shortest paths from $u$ to $v$ and $\sigma_{uv}(S)$ is the number of those paths that pass through $S$. As for group degree, the higher the value of $C_{bc}$, the better.


\subsection{Random Walk Centrality}
\paragraph{Random Walks}
The next definitions come from the work of Lov\'asz in random walk theory \cite{Lovasz}.
\label{sec:Bs}
Let $G=(V,E)$ be a graph such that $|V|=n$ and $|E|=m$, where $n,m \in \mathbb{N}$. Formally, a \emph{random walk} is a sequence of vertices obtained as follows: it starts at a vertex $v_0$, and if at the $t$-th step it is at a vertex $v_t=u$, it moves to a neighbor $v$ of $u$ with probability $p_{uv}=1/d(u)$. 

\color{black}

\normalfont Let $S \subseteq V$.
 The \emph{hitting time for a set} $H(u,S)$ is the expected number of steps that a random walk starting at vertex $u$ takes to reach some vertex in $S$ for the first time. When $S=\{v\}$ is a singleton, we will simply write $H(u,v)$.


One of the issues that all three measures of group centrality (presented on 2.1) have in common, is that they only take into consideration shortest paths between pairs of vertices. A semantic network represents relations between concepts, motivated by an specific context. Should that context change, new and more direct connections may arise. In that regard, as much connections as possible should be taken into consideration, which is precisely what random walk centrality attempts to do.
\color{black}
\begin{definition}[Random Walk Centrality]
\normalfont The \emph{random walk centrality} of $S \subset V$ is
\[
h^{\swarrow}(S) = \displaystyle \frac{\sum_{v \in V\setminus S}H(v,S)}{|V\setminus S|}.
\]
\end{definition}


Essentially random walk centrality is a variation of closeness that takes into consideration all ways to reach $S$ from a vertex $v$, rather than only through the shortest path. As for group closeness, the lower the value of $h^{\swarrow}(S)$, the more central the set is.

\subsection{The arguments for the SWOT analysis}

Let us state them in general terms the arguments of the SWOT analysis.

\paragraph{Obstacles: weaknesses and threats.}
First of all, the problem is computationally hard;
we prove that for all four measures considered it is NP-hard:

\begin{theorem}
The problem of finding an optimal solution $S$ of size $k$, is an NP-hard problem for each of the four group centrality measures
studied in this paper \emph{(}that is, for $C \in \{C_d,C_c,C_{bc},h^{\swarrow}\}$\emph{)} .
\end{theorem}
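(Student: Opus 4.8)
The plan is to prove NP-completeness of the natural decision version of each of the four problems; NP-hardness of the corresponding optimization problem then follows in the standard way (a polynomial-time algorithm finding an optimal size-$k$ set, together with the polynomial-time evaluability of the measure on that set, would decide the decision problem). Membership in NP is immediate: a set $S$ with $|S|=k$ is a polynomial-size certificate, and $C_d(S)$, $C_c(S)$, $C_{bc}(S)$ are computable in polynomial time by breadth-first search / all-pairs shortest paths, while $h^{\swarrow}(S)$ is computable in polynomial time by solving the linear system satisfied by the hitting times $H(v,S)$ (a fact we prove later in this section). For hardness I would reduce, for each measure $C$, from a classical NP-complete covering problem that stays NP-complete on connected graphs (so that the output graph, which will simply be the input graph, is admissible): \emph{Dominating Set} for $C_d$ and $C_c$, and \emph{Vertex Cover} for $C_{bc}$ and $h^{\swarrow}$. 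The unifying idea is that each measure attains its extreme value $1$ on some size-$k$ set exactly when $G$ has a covering set of size $\le k$, and otherwise every size-$k$ set is strictly worse.

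For $C_d$ and $C_c$ I map an instance $(G,k)$ of Dominating Set (with $k\le |V|-1$, the other cases being trivial) to $(G,k)$ itself. Since for $v\notin S$ one always has $d(v,S)\ge 1$, with $d(v,S)=1$ iff $v$ has a neighbour in $S$, it follows that $C_c(S)\ge 1$ with equality iff every vertex outside $S$ has a neighbour in $S$, i.e.\ iff $S$ is a dominating set; the same condition characterizes $C_d(S)=1$, which is the maximum of $C_d$. A dominating set of size $<k$ can be padded to size exactly $k$ (a superset of a dominating set is still one, and $|V|-k\ge 1$ keeps the denominators positive). Hence $G$ has a dominating set of size $\le k$ iff some size-$k$ set $S$ has $C_c(S)=1$ (resp.\ $C_d(S)=1$), so finding the optimum is NP-hard.

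For $h^{\swarrow}$ and $C_{bc}$ I reduce from Vertex Cover, again by the identity transformation on the graph, now taking $k\le |V|-2$ (which keeps Vertex Cover NP-complete and ensures $|V\setminus S|\ge 2$, so every expression is well defined). A random walk started at $v\notin S$ makes at least one step, so $H(v,S)\ge 1$, with equality iff every neighbour of $v$ lies in $S$; hence $h^{\swarrow}(S)\ge 1$ with equality iff no edge of $G$ has both endpoints outside $S$, i.e.\ iff $S$ is a vertex cover. For betweenness, if $S$ is a vertex cover then $V\setminus S$ is independent, so any two distinct $u,v\notin S$ are at distance $\ge 2$ and \emph{every} shortest $u$–$v$ path has its second vertex in $S$, hence passes through $S$; thus $\sigma_{uv}(S)=\sigma_{uv}$ for every relevant pair and $C_{bc}(S)=1$, the maximum. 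Conversely $C_{bc}(S)=1$ forbids any adjacent pair outside $S$, so $S$ must be a vertex cover. Padding a smaller cover up to size $k$ as before, $G$ has a vertex cover of size $\le k$ iff some size-$k$ set $S$ satisfies $h^{\swarrow}(S)=1$ (resp.\ $C_{bc}(S)=1$), and NP-hardness follows.

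The step I expect to demand the most care is the $C_{bc}$ case: because betweenness weighs shortest paths by their multiplicities $\sigma_{uv}$, one must check precisely that when $V\setminus S$ is independent every shortest path joining two of its vertices enters $S$ at the first internal step, so that all ratios $\sigma_{uv}(S)/\sigma_{uv}$ equal $1$ simultaneously; and one must rule out the degenerate regimes $|V\setminus S|\le 1$ and justify padding a cover to size exactly $k$. The analogous equality condition for $h^{\swarrow}$ (namely $H(v,S)=1 \iff N(v)\subseteq S$) and the reductions for $C_d$ and $C_c$ are routine by comparison.
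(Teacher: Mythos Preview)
Your proposal is correct and follows essentially the same approach as the paper: reduce from \textsc{Dominating Set} for $C_d$ and $C_c$ and from \textsc{Vertex Cover} for $C_{bc}$ and $h^{\swarrow}$, in each case via the identity transformation on $(G,k)$ and the observation that the measure attains its extremal value~$1$ on a size-$k$ set exactly when that set is a dominating set (resp.\ vertex cover). Your write-up is in fact more careful than the paper's about padding to size exactly~$k$, handling the degenerate ranges of $k$, and justifying the $C_{bc}=1$ equivalence via the independence of $V\setminus S$.
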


This theorem points to the difficulties of applying the notion to big data
(and in this paper for experimenting with it) and gives a first program
of research on this problem: to develop good approximate algorithms for it.

Second, there is a difficulty associated to experimenting with big data due to the
lack of good benchmarks and semantically ``marked'' data, that is,
networks for which we have a clear agreement on what are their most central
concepts at different granularities (i.e. the size of an optimum candidate).  This is a challenge by itself (the scope of human
physical abilities)  and that is the reason
why we use small (human-scale) datasets in this paper.

 Third, the quality of data. It turns out that the output of the
 centrality and group centrality methods depends heavily on the ``quality'' of the network: in other words, in how well it represents the semantics
we are looking for. The examples we discussed on section 1 are prime cases
of excellent semantic networks (in fact, developed by psychologists based
on systematic experimentation). Unfortunately, the semantic networks that we see in real life
(for example RDF data) are usually not as good.
 We will show through several sample datasets that many of our current
most popular semantic data representations (e.g. RDF graphs) have a bias as they are
mostly constructed in ad-hoc manners (union of pieces
of information scattered from different sources; without the goal
of completeness, etc).\footnote{We are  well aware of the
complexities of determining in a network of
concepts which one is more relevant. This relies on the underlying
semantics of the notions involved, on the domain from which they are
taken from, on the goals that the people that built the network had,
and finally, on the quality of the network itself. Probably due to all of
the above,
today we do not have standard, nor good benchmark data for these tasks.
}

\paragraph{Strengths and Opportunities.}
  First, the notion of {\em group centrality} seems to be worth
exploring in the quest for developing summaries, small
representations or give insight about the meaning and contents of
big semantic network datasets. We show samples of different types of
networks, including stable RDF datasets like DBpedia. The results were
in all cases encouraging.

Second, the notion seems to be robust and become more stable as the
size of the optimum candidate $k$ increases. There are indications from the sample
data we worked out and partial theoretical results (Proposition \ref{bound}),
that in the long term, all measures behave in the same way as $k$ converges to $|V|/2$ (that is, there exists a set that is
optimum for all four measures). From this we have a second line of research:
to develop concluding analytical and real data studies that can support this conjecture.

 Third, the notion of {\em random walk} shows in general less variance
than other measures, i.e. gives more ``unique'' solutions for $k >0$.
This virtue and the fact that this measure systematically
 gives solutions that have elements in common with those obtained through the use of other measures, are
good signals of its significance.
 If we link this observation to the fact that its theoretical basis
(the notion of random walk) has been successfully used in semantic
networks (see Related Work section) there is good ground to make
it a candidate for systematic study at the theoretical and experimental level.

\subsection{Related Work}



{\em Random Walks and Semantic Networks.}
The motivation to address random walk centrality comes from works that have used the notion of random walks and related it with semantics. The work of Abbott et al. \cite{HumanMemory}  compares the functioning of the human mind when searching for memories with a random walk in a semantic network, as both present a similar behavior. They conclude that these results can help clarify the possible mechanisms that could account for PageRank predicting the prominence of words in semantic memory.
Yeh et al. \cite{WikiWalk}  use random walks to determine the semantic relatedness of two elements. In particular, they use a personalized PageRank algorithm with a custom teleport vector. They conclude that random walk combined with personalized PageRank is a feasible and potentially fruitful mean of computing semantic relatedness for words and texts.
Hughes et al. \cite{LexicalSemantic} introduce a new measure of lexical relatedness based on the divergence of the stationary distributions computed using random walks over graphs extracted from WordNet.
 All these works have been valuable sources of inspiration for our method.

{\em Relevance and quality ranking in the Semantic Web.}
 As our motivation of studying group centrality points to its use in the
area of relevance in Semantic data, below we discuss some related work
made on it.
Graves et al.\cite{RankNodesRDF} propose a graph-theoretic measure
called \textit{noc-order} for ranking nodes in RDF graphs. 
They base their ranking in a variation of
\textit{closeness centrality}, which measures how costly is to reach
all the nodes in the graph from one specific vertex. Note that this
idea does not use the concept
of expected length (via random walks).
Zhang et al. \cite{zhang2007ontology} address the problem of
summarizing RDF graphs using an RDF sentence as the atomic
unit. They focus in ranking the RDF sentences to get a summary of the
ontology and base their method in centrality measures such as degree-,
eigenvector-, or betweenness-centrality.
 Cheng et al.~\cite{Cheng2011} address the problem of
ranking features of an entity in a semantic graph. Their method
relies on centrality notions enhanced with the information captured by labels.
  The topics of these works is related to our work,
though we focus in finding central elements in the graph as a whole,
while  they do it in summarizing entities/ontologies; also
they do not present a conceptual notion of key elements,
rely only on experimental evaluations and do not address group centrality.
Ding et al. \cite{ding2005finding} address ranking of semantic objects
on the Web. They concentrate on documents, terms and RDF graphs as a
whole. For terms, they use popularity measures essentially based on
use and population, weighted by some parameters of the dataset
(occurrences, etc.).
Supekar et al. \cite{supekar2004characterizing} address the problem of
quality of ontologies. They characterize ontological features to
calculate a score in order to determine the quality of the ontology. They
characterize features like how well the ontology reflects the domain
of interest, correctness of the ontology with respect to the domain of
discourse, how easy the ontology is to understand or the depth and
width of the ontology.
  These last two papers deal with ranking of ontologies as such,
but do not address the problem of finding a set of relevant concepts/nodes
in RDF graphs, nor present any conceptual framework to define them.

\section{Group Centrality on Semantic Network Data}

    We are interested on knowing what the group centrality approach applied to
    semantic networks has to offer. In order to do this, we apply group degree,
    closeness, betweenness, and random walk centrality notions to
    semantic networks data and study their behavior.
    As was stated on section 2,  we restrict our analysis to small networks  for
    two reasons: finding a set with optimal value for an specific
    centrality measure is NP-hard; and small networks (human scale) allow to
   get an insight about which concepts these group centralities select and
    how those selections differ from each other.

    \subsection{Datasets}

\noindent
     \textbf{Networks of natural concepts} \cite{proximity}.
This is the dataset we presented in the introduction.
There we showed two networks  built, respectively, by a group of
undergraduate psychology  and by graduate biologist.
Additionally, we show here two other networks.
A group of students from
        introductory psychology courses were asked to rate the degree of
        relatedness of all pairwise combinations of 25 natural concepts. Using
        this information the authors constructed two networks, according to two
        different thresholds of concept relatedness.

\noindent
     \textbf{Perception: semantic network of common sense} \cite{perception}.
        This network stores knowledge about what things look and feel like. It
        has about $9,000$ manually annotated relations between $4,000$ mundane
        concepts, e.g, cat, house, snow, etc.

\noindent
     \textbf{DBpedia Categories} \cite{DBpedia}.
        This dataset covers all the Wikipedia article categories and how they
        are related using the SKOS vocabulary. The network covers $1,130,342$
        categories with $2,308,585$ relations between them.

\noindent
     \textbf{Roget's Thesaurus}
        This is a widely used English-language thesaurus that contains more than
        $15,000$ words. This network was been used in works studying semantic
        similarity and sense disambiguation, among others.

\smallskip
    For each of the last 3 networks we extracted samples of 40 nodes
using  random-walk sampling according to
 Leskovec and  Faloutsos \cite{samplinglargegraphs} until 40 different nodes were visited and then
    we extracted the induced subgraph. Here we show two such samples
    for each dataset.

    \begin{table}[h!]
        \centering
        {
        \scriptsize
        \begin{tabularx}{\textwidth}{| c | Y | Y | Y || Y |}
            \hline
            ~n & degree & closeness & betweenness & random-walk \\ \hline
            1 & \{11\} & \{11\} & \{11\} & \{11\} \\ \hline
            2 & \{4, 11\} & \{2, 13\} , \{4, 11\} & \{4, 11\} & \{4, 11\} \\ \hline
            3 & \{4, 11, 13\} , \{0, 11, 13\} , ... (9) & \{0, 11, 13\} , \{4, 11, 13\} , ... (4) & \{4, 11, 13\} & \{0, 11, 13\} \\ \hline
            4 & \{2, 4, 11, 13\} , ... (23) & \{2, 4, 11, 13\} , ... (5)] & \{2, 4, 11, 13\} & \{2, 4, 11, 13\} \\ \hline
        \end{tabularx}
        }
        \smallskip
        \caption{
            Solutions for group centrality on Proximity Semantic Network 1.
        }
        \label{table:proximity}
    \end{table}
\vspace{-7mm}
    \begin{table}[h!]
        \centering
        {
        \scriptsize
        \begin{tabularx}{\textwidth}{| c | Y | Y | Y || Y |}
            \hline
~n~ & degree & closeness & betweenness & random-walk \\ \hline
1 & \{11\} & \{11\} & \{11\} & \{11\} \\ \hline
2 & \{0, 11\} & \{2, 11\} , ... (2) & \{2, 11\} & \{2, 11\} \\ \hline
3 & \{2, 13, 21\} , ... (8) & \{2, 13, 21\} & \{2, 11, 13\} & \{2, 13, 21\} \\ \hline
4 & \{2, 4, 13, 21\} , ... (16) & \{2, 4, 13, 21\} & \{2, 11, 13, 21\} & \{2, 4, 13, 21\} \\ \hline
        \end{tabularx}
        }
        \smallskip
        \caption{
            Solutions for group centrality on Proximity Semantic Network 2.
        }
        \label{table:proximity11}
    \end{table}
\vspace{-7mm}
    \begin{table}[h!]
        \centering
        {
        \scriptsize
        \begin{tabularx}{\textwidth}{| c | Y | Y | Y || Y |}
            \hline
~n~ & degree & closeness & betweenness & random-walk \\ \hline
1 & \{6\} , \{17\} & \{6\} & \{17\} & \{6\} \\ \hline
2 & \{0, 6\} & \{0, 6\} , \{10, 17\} & \{0, 6\} & \{0, 6\} \\ \hline
3 & \{3, 13, 17\} , \{3, 10, 17\}, ... (10) & \{3, 10, 17\} & \{0, 13, 17\} & \{3, 10, 17\} \\ \hline
4 & \{0, 6, 10, 17\} , ... (18) & \{0, 6, 10, 17\} , ... (6) & \{0, 6, 10, 17\} & \{0, 6, 10, 17\} \\ \hline
        \end{tabularx}
        }
        \smallskip
        \caption{
          Solutions for group centrality on Proximity Semantic Network constructed by non-biologist students.
        }
        \label{table:proximity2}
    \end{table}
\vspace{3mm}
    \begin{table}[h!]
        \centering
        {
        \scriptsize
        \begin{tabular}{| c | c | c | c || c |}
            \hline
            ~n~ & degree & closeness & betweenness & random-walk \\ \hline
1 & \{12\} & \{12\} & \{12\} & \{12\} \\ \hline
2 & \{3, 12\} & \{3, 12\} & \{3, 12\} & \{3, 12\} \\ \hline
3 & \{3, 12, 19\}, \{3, 12, 21\} & \{3, 12, 21\}, \{3, 12, 22\} & \{3, 12, 19\}, \{3, 12, 21\} & \{3, 12, 22\} \\
 & \{3, 12, 22\}, \{3, 12, 23\} & \{3, 12, 23\}  &  &  \\ \hline
4 & \{3, 12, 19, 23\} & \{3, 12, 19, 23\} & \{3, 12, 15, 19\}, \{3, 12, 15, 21\} & \{3, 5, 12, 22\} \\ \hline

        \end{tabular}
        }
        \smallskip
        \caption{
          Solutions for group centrality on Proximity Semantic Network constructed by biologist students.
        }
        \label{table:proximity3}
    \end{table}
\vspace{-7mm}
    \begin{table}[h]
        \centering
        {
        \scriptsize
        \begin{tabular}{| c | c | c | c || c |}
            \hline
~n~ & degree & closeness & betweenness & random-walk \\ \hline
1 & \{4\} , \{20\} & \{4\} & \{20\} & \{4\} \\ \hline
2 & \{4, 20\} & \{4, 20\} & \{4, 20\} & \{4, 20\} \\ \hline
3 & \{4, 6, 20\} , \{6, 7, 20\} & \{0, 4, 20\} & \{4, 7, 20\} & \{4, 18, 20\} \\ \hline
4 & \{4, 6, 18, 20\} , \{0, 4, 6, 20\} & \{0, 4, 6, 20\} & \{4, 6, 7, 20\} & \{4, 6, 18, 20\} \\ \hline
        \end{tabular}
        }
        \smallskip
        \caption{
            Solutions for group centrality on DBpedia sample 1.
        }
        \label{table:dbpedia1}
    \end{table}
\vspace{-7mm}
    \begin{table}[h!]
        \centering
        {
        \scriptsize
        \begin{tabular}{| c | c | c | c || c |}
            \hline
~n~ & degree & closeness & betweenness & random-walk \\ \hline
1 & \{13\} & \{13\} & \{13\} & \{13\} \\ \hline
2 & \{0, 13\} , \{1, 13\} , ... (2) & \{1, 13\} & \{0, 13\} & \{12, 13\} \\ \hline
3 & \{0, 1, 13\} , ... (4) & \{0, 1, 13\} & \{0, 1, 13\} & \{0, 1, 13\} \\ \hline
4 & \{0, 1, 3, 13\} , ... (1) & \{0, 1, 3, 13\} & \{0, 1, 3, 13\} & \{0, 1, 3, 13\} \\ \hline
        \end{tabular}
        }
        \smallskip
        \caption{
            Solutions for group centrality on DBpedia sample 2.
        }
        \label{table:dbpedia2}
    \end{table}


    \begin{table}[h!]
        \centering
        {
        \scriptsize
        \begin{tabular}{| c | c | c | c || c |}
            \hline
~n~ & degree & closeness & betweenness & random-walk \\ \hline
1 & \{25\} & \{14\} & \{14\} & \{13\} \\ \hline
2 & \{14, 25\} & \{5, 14\} , \{7, 13\} , \{7, 14\} , \{8, 13\} & \{14, 25\} & \{6, 14\} \\ \hline
3 & \{4, 14, 25\} , ... (5) & \{4, 14, 25\} , ... (4) & \{4, 14, 25\} & \{0, 6, 14\} \\ \hline
4 & \{0, 4, 14, 25\} , ... (10) & \{1, 4, 14, 25\} , ... (2) & \{4, 12, 14, 25\} & \{0, 4, 14, 25\} \\ \hline
        \end{tabular}
        }
        \smallskip
        \caption{
            Solutions for group centrality on Perception sample 1.
        }
        \label{table:perception1}
    \end{table}

    \begin{table}[h!]
        \centering
        {
        \scriptsize
        \begin{tabular}{| c | c | c | c || c |}
            \hline
~n~ & degree & closeness & betweenness & random-walk \\ \hline
1 & \{29\} , ... (1) & \{15\} & \{1\} & \{29\} \\ \hline
2 & \{2, 29\} & \{0, 15\} , ... (1) & \{1, 2\} & \{29, 39\} \\ \hline
3 & \{2, 8, 29\} , ... (3) & \{0, 15, 32\} , ... (7) & \{1, 2, 8\} & \{2, 8, 39\} \\ \hline
4 & \{2, 8, 29, 39\} , ... (2) & \{2, 26, 29, 39\} , ... (1) & \{2, 8, 10, 29\} & \{2, 26, 29, 39\} \\ \hline
        \end{tabular}
        }
        \smallskip
        \caption{
            Solutions for group centrality on Perception sample 2.
        }
        \label{table:perception2}
    \end{table}


    \begin{table}[h!]
        \centering
        {
        \scriptsize
        \begin{tabular}{| c | c | c | c || c |}
            \hline
~n~ & degree & closeness & betweenness & random-walk \\ \hline
1 & \{10\} & \{35\} & \{35\} & \{10\} \\ \hline
2 & \{10, 25\} & \{10, 25\} & \{25, 35\} & \{10, 25\} \\ \hline
3 & \{10, 25, 39\} & \{10, 25, 39\} & \{10, 25, 35\} & \{10, 25, 39\} \\ \hline
4 & \{10, 17, 25, 39\} , ... (87) & \{10, 17, 25, 39\} , ... (87) & \{10, 17, 25, 35\} & \{10, 17, 25, 39\} \\ \hline
        \end{tabular}
        }
        \smallskip
        \caption{
            Solutions for group centrality on Roget's Thesaurus sample 1.
        }
        \label{table:thesaurus1}
    \end{table}


    \begin{table}[h!]
        \centering
        {
        \scriptsize
        \begin{tabular}{| c | c | c | c || c |}
            \hline
~n~ & degree & closeness & betweenness & random-walk \\ \hline
1 & \{11\} & \{37\} & \{37\} & \{37\} \\ \hline
2 & \{11, 14\} & \{14, 26\} & \{26, 37\} & \{14, 26\} \\ \hline
3 & \{1, 14, 29\} , ... (6) & \{11, 14, 29\} , ... (6) & \{11, 26, 37\} & \{11, 14, 29\} \\ \hline
4 & \{11, 14, 29, 37\} , ... (440) & \{11, 14, 29, 37\} , ... (440) & \{11, 14, 26, 37\} & \{11, 14, 29, 37\} \\ \hline
        \end{tabular}
        }
        \smallskip
        \caption{
            Solutions for group centrality on Roget's Thesaurus sample 2.
        }
        \label{table:thesaurus3}
    \end{table}

\newpage
    \subsection{Discussion of results}


    For each semantic network, and each centrality measure, we computed the
    optimal solutions of size 1, 2, 3, and 4. The results are presented in
    Tables 1 to 10. It is important to note that in some cases there were
    multiple candidates for optimal solution, but we only include at most two of
    them\footnote{ The complete results, graphs and labels of the networks can
    be found in the appendix.} (if
    there are more than two solutions, there is a  mark ``... $(n)$'' where $n$
    is the number of extra solutions) .

    From the results obtained, we can advance two main
    observations:

\noindent
 {\em Robustness of group centrality.} It is noticeable that in the samples
 the optimal solutions for all centrality measures become
 similar as the size $k$ of the set increases. Indeed, for groups
 of size 1, it can observed
    that there are networks for which the solution is different, depending on the
    centrality notion applied (e.g. Table \ref{table:proximity2}). On the other hand, for groups of size
    4, it can be observed that in most networks there is a common set of nodes
    selected by all four different measures used.

\smallskip
\noindent
{\em Non-ambiguity of group random-walk.} As stated previously, for a
given size $k$, more than one optimal solution may exist for group centrality. Moreover, from the results
 it can be inferred that as $k$ increases, so does the number of optimal candidates for
    certain particular measures. For example, Table
    \ref{table:proximity} shows that degree and
    closeness centrality have 24 (resp. 6) optimal solutions of size 4.
    There is even an extreme case in Table \ref{table:thesaurus3}, where degree and
    closeness have more than 400 different optimal group solutions.
   In a small sample of 40 nodes this variance does not help achieve the proposed goal.
 However, betweenness centrality seems to behave in general better in this
 matter (there is some dispersion, e.g.  Table \ref{table:proximity3},
 but it is small). Random-walk centrality, on the other hand, is consistently
better in this regard. In our sample selections, it is the only notion
for which there is systematically a unique optimal solution for any size, and for all the networks.

\section{Theoretical aspects of Group Centrality}

\color{black}

As we saw in previous sections, group centrality offers interesting
new insights on semantic networks. In this section we study the notion
from a theoretical point of view, focusing in the complexity of
computing it.

\subsection{Complexity of Group Centrality}

Let us define formally the problem of computing the most central set
of nodes in a network under different centrality measures.

\begin{definition}[Optimumset]
Given a graph $G=(V,E)$, a positive integer $k$ and a function $C:\mathcal{P}(V) \rightarrow \mathbb{R}$, define $optimumset(G,k,C)$ as the problem of finding a set $S\subset V$, of size $k$ such that
\[
S \in \displaystyle \argmin_{U \subset V;|U|=k}C(S).
\]
For a real number $\alpha>0$, $optimumset(G,k,C,\alpha)$ will denote the corresponding decision problem:  find if there exists a set $S\subset V$, of size $k$ such that $C(S)=\alpha$.
\end{definition}
\textbf{Theorem 2.}\ \ \emph{$optimumset(G,k,C,\alpha)$ is NP-complete for all four group centrality measures studied in this paper \emph{(}that is, for $C \in \{C_d,C_c,C_{bc},h^{\swarrow}\}$\emph{)}.}

\begin{proof}
First note that it is not difficult to prove that $C_d$, $C_c$ and $C_{bc}$ are all computable in polynomial time. Same for $h^{\swarrow}$ but we will leave that demonstration for later (see proposition \ref{poly}). Thus the corresponding problems are all in NP.

Now, consider $\alpha=1$ and let $S_d$, $S_c$, $S_{bc}$ and $S_{h^{\swarrow}}$ be solutions of the corresponding decision problems.

Note that $C_c(S_c)=1$ (resp. $C_d(S_d)=1$) if and only if $S_c$ (resp. $S_d$) is a dominating set of cardinality $k$ in $G$. That is, the problem of finding a dominating set of cardinality $k$ in $G$ is polynomial time reducible to $optimumset(G,k,\alpha,C_c)$ and $optimumset(G,k,\alpha,C_d)$.

Indeed, $C_c(S_c)=1$ if and only if $d(v,S_c)=1, \forall v \notin S_c$, which happens if and only if every vertex $v \notin S_c$ has at least one neighbor on $S_c$ (which is also a necessary and sufficient condition for $C_d(S_d)=1$). That is, $S_c$ (and also $S_d$) is a dominating set.

On the other hand, $C_{bc}(S_{bc})=1$ (resp. $C_{h^{\swarrow}}(S_{h^{\swarrow}})=1$) if and only if $S_{bc}$ (resp. $S_{h^{\swarrow}}$) is a vertex cover of $G$ of size $k$. That is, the problem of finding a vertex cover of $G$ of size $k$ is polynomial time reducible to $optimumset(G,k,\alpha,C_{h^{\swarrow}})$ and $optimumset(G,k,\alpha,C_{bc})$.

Indeed, $C_{bc}(S_{bc})=1$ if and only if
$\sigma_{uv}(S_{bc})=\sigma_{uv}, \forall u,v \notin S_{bc}$ with
$u<v$, which occurs if and only if for every pair $\forall u,v \notin
S_{bc}$ with $u<v$, all shortest paths that connect $u$ and $v$ have
some vertex in $S_{bc}$. This is equivalent to having $N(v) \subseteq
S_{bc}$ for all $v \in S_{bc}$. That is, $S_{bc}$ is a vertex cover of $G$.

Finally, $C_{h^{\swarrow}}(S_{h^{\swarrow}})=1$ if and only if $h(v,S_{h^{\swarrow}})=1, \forall v \notin S_{h^{\swarrow}}$, but $h(v,S_{h^{\swarrow}})=1$ occurs if and only if $N(v) \subset S, \forall v \notin S_{h^{\swarrow}}$. That is, $S_{h^{\swarrow}}$ is a vertex cover of $G$.

\qed
\end{proof}

\subsection{Properties of Random Walk Centrality}

\noindent Given a graph $G=(V,E)$,
we can generalize the notion of \emph{random walk} to the case where
there is a  weight function $\omega: E \rightarrow \mathbb{R}$
on the edges of $G$, by changing the transition probabilities for
 $p_{uv}=\omega(uv)/\omega(u)$, with $\omega(u)=\sum_{w \in
   N(u)}\omega(uw)$ (thus the case $\omega \equiv 1$ gives
 the original definition of random walk).

Note that the sequence of random vertices $(v_t: t=0,1,...)$ is a Markov chain. Let $P_t$ will denote the distribution of $v_t$: $P_t(v)=\mathbb{P}(v_t=v)$. The vertex $v_0$ may be fixed, but may also be drawn from an initial distribution $P_0$. This initial distribution is said to be $stationary$ if $P_1=P_0$ (which will imply that $P_t=P_0 \ \forall t \geq 0$, because of the construction of the random walk). It can be easily proved that the distribution $\pi(v):=\omega(v)/\sum_{v \in V}\omega(w)$ is stationary for every graph $G$ and weight function $\omega$. From now on $\pi$ will be referred simply as the \emph{stationary distribution} (it is not difficult to prove that this distribution is unique, which makes this reference valid).

\begin{proposition}
\label{poly}
Given  $G=(V,E)$, a weight function $\omega: E \rightarrow \mathbb{R}$,
a subset $S \subset V$ and $v\in V$, then  $h(v,S)$ and
$h^{\swarrow}(S) $ can be computed in polynomial time.
\end{proposition}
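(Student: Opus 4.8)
The plan is to reduce the computation of hitting times to solving a single system of linear equations. Write $T = V \setminus S$ and enumerate its vertices $1,\dots,t$ with $t = |T|$. Conditioning on the first step of the walk and using the convention $H(w,S)=0$ for $w \in S$, the hitting times satisfy, for every $u \in T$,
\[
H(u,S) \;=\; 1 + \sum_{w \in N(u)} p_{uw}\, H(w,S).
\]
Let $Q$ be the $t \times t$ matrix with entries $Q_{uw} = p_{uw}$ for $u,w \in T$ (the transition matrix of the walk restricted to $T$), let $\mathbf{h} = \big(H(u,S)\big)_{u \in T}$, and let $\mathbf{1}$ denote the all-ones vector. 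Then the above is exactly the linear system $(I - Q)\mathbf{h} = \mathbf{1}$. Note this same system simultaneously determines $H(w,S)$ for all $w \in T$, so computing a single $H(v,S)$ (equivalently $h(v,S)$) is no harder than computing all of them.

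The key step is to show that $I - Q$ is invertible, so that $\mathbf{h}$ is uniquely determined (and, as a by-product, that the hitting times are well defined and finite). Since $G$ is connected and $S \neq \emptyset$, from every vertex of $T$ there is a path to $S$; hence in the absorbing Markov chain obtained by freezing the walk upon entering $S$, each transient state reaches the absorbing set with positive probability. By the standard theory of absorbing chains this means $Q$ is a substochastic matrix of spectral radius strictly less than $1$, so $Q^n \to 0$, the Neumann series converges, and
\[
(I - Q)^{-1} \;=\; \sum_{n \ge 0} Q^n \;\ge\; 0 ,
\]
giving the unique finite solution $\mathbf{h} = (I - Q)^{-1}\mathbf{1}$.

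Given this, the algorithm is immediate: form $Q$ from $\omega$ (the entries $p_{uw} = \omega(uw)/\omega(u)$, which costs $O(n+m)$ arithmetic operations), then solve the $t \times t$ system $(I - Q)\mathbf{h} = \mathbf{1}$ by Gaussian elimination in $O(t^3) = O(n^3)$ arithmetic operations. Reading off the coordinate indexed by $v$ yields $h(v,S)$, and summing the $t$ coordinates of $\mathbf{h}$ and dividing by $|V \setminus S|$ yields $h^{\swarrow}(S) = \frac{1}{|V\setminus S|}\sum_{w \in V\setminus S} H(w,S)$. Both quantities are thus computed in polynomial time (and, if $\omega$ is rational, in polynomial bit complexity as well, since Gaussian elimination keeps entry sizes polynomially bounded).

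The main obstacle is the invertibility argument for $I - Q$: one must genuinely invoke the connectedness of $G$ together with $S \neq \emptyset$ to guarantee that the walk is absorbed in $S$ almost surely, rather than assuming finiteness of the hitting times. Once that is in place, the remainder is routine linear algebra and accounting.
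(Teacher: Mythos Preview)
Your argument is correct. The first-step recurrence, the invertibility of $I-Q$ via the substochastic/absorbing-chain argument (using connectedness and $S\neq\emptyset$), and the $O(n^3)$ solve are all standard and sound.

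The paper takes a genuinely different route. Rather than working directly with the absorbing system on $V\setminus S$, it first proves a structural lemma: collapse $S$ into a single weighted super-vertex $v_S$ (with edge weight $\omega_S(uv_S)=\sum_{w\in N_G(u)\cap S}\omega(uw)$), and show that $H_G(u,S)=H_{G_S}(u,v_S)$. It then invokes the fundamental-matrix identity $H(i,j)=(Z_{jj}-Z_{ij})/\pi(j)$ with $Z=(I-(P-P_\infty))^{-1}$, which gives single-target hitting times by one matrix inversion on the collapsed graph. Your approach is more elementary and self-contained: you avoid both the collapsing construction and the fundamental-matrix machinery, and you get all $H(u,S)$ for $u\notin S$ from a single $|V\setminus S|\times|V\setminus S|$ system---this is the textbook absorbing-chain treatment. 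The paper's detour buys a reusable reduction (set hitting $\to$ vertex hitting) and a closed-form expression in terms of $Z$, which can be convenient if one also wants other random-walk quantities; your approach buys directness and a cleaner complexity accounting (including the remark on bit complexity, which the paper does not address).
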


For the proof, first we need the following result

\begin{lemma}
\label{np}
For a graph $G=(V,E)$, a weight function $\omega$ and $S \subseteq V$ define $G_{\scriptscriptstyle S}$ as the following weighted graph:

$V(G_{\scriptscriptstyle S}) = (V \setminus S) \cup
\{v_{\scriptscriptstyle S}\}$

$E(G_{\scriptscriptstyle S}) = \{e \in E / e \cap S = \phi \} \cup \{e=uv_{\scriptscriptstyle S} / u \in V, N_{\scriptscriptstyle G}(u) \cap S \neq \phi\}$

\noindent
and $\omega_{\scriptscriptstyle S}:\ E(G_{\scriptscriptstyle S})
 \longrightarrow \mathbb{N}$ defined as:
			 $\omega_{\scriptscriptstyle S}(uv) = \omega(uv)$ if  $uv
                           \cap S = \emptyset$.
Otherwise, i.e. if $ v=v_{\scriptscriptstyle S}$, then
$ \omega_{\scriptscriptstyle S}(uv) =\sum\limits_{w \in N_{\scriptscriptstyle G}(u) \cap S}\omega(uw)$.

Then for $u \in V \setminus S$ we have that
$H_{\scriptscriptstyle G}(u,S)=H_{{\scriptscriptstyle G}_{\scriptscriptstyle S}}(u,v_{\scriptscriptstyle S}).$
\end{lemma}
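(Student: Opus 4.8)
The plan is to argue that a random walk on $G$ started at $u\in V\setminus S$ and stopped the first time it enters $S$ is, as a stochastic process, identical (up to renaming the absorbing state) to a random walk on $G_S$ started at $u$ and stopped the first time it reaches $v_S$. Everything reduces to checking that the transition probabilities match, and the one identity that makes this work is that contraction preserves the total weight at every surviving vertex: for $w\in V\setminus S$,
\[
\omega_S(w)\;=\;\sum_{x\in N_G(w)\setminus S}\omega(wx)\;+\;\sum_{x\in N_G(w)\cap S}\omega(wx)\;=\;\omega(w),
\]
because the $G_S$-edges at $w$ are exactly the edges $wx$ with $x\notin S$ together with, when $N_G(w)\cap S\neq\emptyset$, the single edge $wv_S$ of weight $\sum_{x\in N_G(w)\cap S}\omega(wx)$ (an empty sum when $N_G(w)\cap S=\emptyset$). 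Hence for adjacent $w,x\in V\setminus S$ we get $\omega_S(wx)/\omega_S(w)=\omega(wx)/\omega(w)$, i.e. the same step probability in both graphs, and the $G_S$-probability of stepping from $w$ to $v_S$, namely $\big(\sum_{x\in N_G(w)\cap S}\omega(wx)\big)/\omega(w)$, is exactly the $G$-probability that the walk at $w$ enters $S$ in one move.

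Given this, I would conclude by first-step analysis. Both $G$ and $G_S$ are connected (for $G_S$, use connectivity of $G$ and $S\neq\emptyset$, routing through $v_S$ whenever a path in $G$ meets $S$), so the expected hitting times are finite and, by standard Markov chain theory (as in \cite{Lovasz}), $\big(H_G(w,S)\big)_{w\in V\setminus S}$ is the unique solution of the linear system
\[
H_G(w,S)=1+\sum_{x\in N_G(w)\setminus S}\frac{\omega(wx)}{\omega(w)}\,H_G(x,S),
\]
and likewise $\big(H_{G_S}(w,v_S)\big)_{w\in V\setminus S}$ is the unique solution of the analogous system built from $G_S$. By the weight identities above these two systems have the same index set, the same coefficients and the same right-hand side, hence the same solution; in particular $H_G(u,S)=H_{G_S}(u,v_S)$ for every $u\in V\setminus S$.

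An equivalent route that sidesteps the uniqueness argument is pathwise: for each finite walk $u=w_0,w_1,\dots,w_{\ell-1}$ lying inside $V\setminus S$, compare the probability of following it and then entering $S$ (resp. reaching $v_S$) at step $\ell$; the weight-preservation identity makes these two probabilities agree term by term, and summing over all such walks of a fixed length $\ell$ yields $\mathbb{P}_u(\tau_S=\ell)=\mathbb{P}_u(\tau_{v_S}=\ell)$, whence the hitting-time expectations coincide. Either way, the real content is just the bookkeeping around $\omega_S(w)=\omega(w)$ and the value of the contracted edge weight $\omega_S(wv_S)$; the only genuine (if minor) point that needs care is verifying that $G_S$ is connected, so that the hitting times are finite and the linear systems are genuinely uniquely solvable.
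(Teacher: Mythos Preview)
Your proposal is correct. The pathwise route you sketch---matching $\mathbb{P}_u(\tau_S=\ell)=\mathbb{P}_u(\tau_{v_S}=\ell)$ term by term---is exactly the paper's argument, which it phrases as an induction on $\ell$ conditioning on the first step. Your first route, via the first-step linear system and uniqueness of its solution, is a genuine alternative: it trades the distributional equality for an appeal to finiteness and uniqueness, which is why you (rightly) flag the connectivity of $G_S$; the paper never addresses that point. You also isolate the key bookkeeping identity $\omega_S(w)=\omega(w)$ for $w\in V\setminus S$, which the paper uses silently in both its base case (writing $\omega_S(u)$ in the denominator) and its inductive step (identifying $\omega(uw)/\omega(u)$ with $\omega_S(uw)/\omega_S(u)$).
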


\begin{proof}
For a vertex $w \in V \setminus S$, let $T_{w,S}$ and $T_{w,v_S}$ be two random variables that count the number of steps that a random walk (on $G$ and $G_S$ respectively) starting at vertex $w$ takes in order to reach some vertex of $S$ and $v_{\scriptscriptstyle S}$ (respectively) for the first time.
We will prove that $T_{u,S}$ and $T_{u,v_S}$ have the same distribution.

Let us proceed by induction on the possible values of $T_{u,S}$ and $T_{u,v_{\scriptscriptstyle S}}$.

If $n=1$ then
\begin{align*}
\mathbb{P}(T_{u,S}=1)=\frac{\sum\limits_{w \in N_G(u) \cap S}\omega(uw)}{\omega(u)}=\frac{\omega_{\scriptscriptstyle S}(uv_{\scriptscriptstyle S})}{\omega_{\scriptscriptstyle S}(u)}=\mathbb{P}(T_{u,v_{\scriptscriptstyle S}}=1).
\end{align*}
Let $n>1$ and assume that the property holds for $n-1$. Then
\begin{align*}
\mathbb{P}(T_{u,S}=n)&=\sum_{w \in V\setminus S}\frac{\omega(uw)}{\omega(u)}\mathbb{P}(T_{w,S}=n-1)\\
                     &=\sum_{w \in V\setminus S}\frac{\omega(uw)}{\omega(u)}\mathbb{P}(T_{w,v_S}=n-1)\\
										 &=\sum_{w \in V(G_S)\setminus v_S}\frac{\omega(uw)}{\omega(u)}(T_{w,v_S}=n-1)\\
										 &=\sum_{w \in V(G_S)\setminus v_S}\frac{\omega_{\scriptscriptstyle S}(uw)}{\omega_{\scriptscriptstyle S}(u)}\mathbb{P}(T_{w,v_S}=n-1)=\mathbb{P}(T_{u,v_S}=n).
\end{align*}
Using this we have that
$ H_{\scriptscriptstyle G}(u,S)=\mathbb{E}(T_{u,S})=\mathbb{E}(T_{u,v_S})=H_{{\scriptscriptstyle G}_S}(u,v_S).$
\qed
\end{proof}

\begin{proof}[of proposition \ref{np}]
Let $P$ be the probability transition matrix associated to a random
walk on $G$, and the function $\omega$. That is, $P_{ij}=\frac{\omega(ij)}{\omega(i)}$. Define $P_{\infty}$ as
\begin{align*}
P_{\infty}&=\begin{pmatrix}
\pi(1) & \pi(2) & \dots & \pi(n)\\
\vdots & \vdots & \dots & \vdots\\
\pi(1) & \pi(2) & \dots & \pi(n)
\end{pmatrix}.
\end{align*}
where $\pi$ is the stationary distribution. Define also $Z:=(I-(P-P_{\infty}))^{-1}$ and let $H$ be the hitting time matrix of $G$, that is, $H_{ij}=H_{\scriptscriptstyle G}(i,j)$. Then it can be proved as stated in \cite{zeta} that for $i,j \in V$
\begin{equation}
H_{\scriptscriptstyle G}(i,j)=\frac{Z_{\scriptstyle{jj}}-Z_{\scriptstyle{ij}}}{\pi(j)}. \label{eq1}
\end{equation}


Therefore, by combining equation (\ref{eq1}) and lemma \ref{np} we can
compute $h(v,S)$ by inverting a suitable matrix $Z$, which can be made in polynomial time.
\qed
\end{proof}

\begin{lemma}
There is a parameterized family of graphs for which
Random walk centrality behaves systematically differently than
degree, closeness and betweenness group centrality.
\end{lemma}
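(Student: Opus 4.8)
The plan is to exhibit an explicit one-parameter family $\{G_n\}$ and compute, or at least estimate, the optimal sets for each of the four measures, showing that random-walk centrality systematically picks a different set. A natural candidate is a family where the ``shortest-path'' measures (degree, closeness, betweenness) are all dominated by high-degree hub vertices, while random walk is sensitive to the \emph{global} structure (volume, bottlenecks) rather than just the local neighbourhood. For concreteness I would take $G_n$ to be a ``broom'' or ``lollipop''-type graph: a clique (or a dense gadget) $K$ of fixed size attached by a single bridge vertex $b$ to a long path (or to a large sparse tree) $P_n$ on $n$ vertices. For $k=1$: the degree-central, closeness-central and betweenness-central vertices are all the bridge vertex $b$ (or a clique vertex adjacent to it) — $b$ dominates the clique, minimizes summed distance, and every geodesic from the path side to the clique side passes through it. But $H(v,\cdot)$ for a vertex deep in the clique is very small when the target is another clique vertex, whereas hitting a far endpoint of the path takes $\Theta(n^2)$ steps; averaging $H(v,S)$ over all $v\notin S$, the random-walk-optimal singleton should instead sit \emph{on the path}, near its ``center of mass'', because that minimizes the dominating $\Theta(n^2)$ contributions from the many path vertices. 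So already $k=1$ gives a separation, and one checks it persists for a range of $k$.

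The key steps, in order, would be: (1) fix the family $G_n$ precisely and state its parameters (clique size constant, path length $n$, the bridge); (2) for degree, closeness and betweenness, argue by a short direct computation that the optimal $S$ of size $k$ consists of the bridge plus the top vertices of the clique/its neighbourhood — these are all local, structural facts about simple and geodesic paths, provable without any probability; (3) for random-walk centrality, use the standard formula $H(u,v)$ in terms of the fundamental matrix (equation~(\ref{eq1}) above) together with Lemma~\ref{np} to reduce $h^{\swarrow}(S)$ to a hitting time on $G_S$, and then invoke the classical estimates for hitting times on paths/trees ($H(u,v)=\Theta(\mathrm{dist}$ and $\mathrm{volume})$) to show that the average $\sum_{v\notin S} H(v,S)$ is minimized by a set concentrated on the long path rather than on the clique; (4) compare and conclude that for each $n$ (large enough) the random-walk optimizer differs from the common optimizer of the other three, and that this holds for all sizes $k$ in some nonempty range, so the family is genuinely parameterized.

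The main obstacle I expect is step~(3): nailing down the random-walk-optimal set sharply enough to be \emph{disjoint} from (or at least provably distinct from) the other three optimizers, rather than merely ``not obviously equal''. Closed forms for $H(v,S)$ when $|S|>1$ are awkward; the cleanest route is probably to pick the family so that the dominant term in $\sum_v H(v,S)$ is transparently the sum of $\Theta(n^2)$-sized contributions of path vertices, so that the clique contributes only $O(1)$ per vertex and hence any optimal random-walk set must ``spend its budget'' covering the path, whereas the betweenness/closeness sets provably do not (they are pinned to the bridge by the geodesic structure). If the bare lollipop does not cleanly separate all three shortest-path measures simultaneously at the same $k$, I would adjust the gadget — e.g. replace the clique by a small graph tuned so that degree, closeness and betweenness all agree on it — so that the statement ``behaves systematically differently than \emph{all of} degree, closeness and betweenness'' holds uniformly in $n$.
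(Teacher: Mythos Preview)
Your lollipop construction does not do what you claim. With a clique of \emph{fixed} size $r$ attached at one end of a path $P_n$, the path carries almost all the vertices, so closeness and betweenness are both governed by the path, not by the bridge. Concretely: the summed distance from the bridge $b$ to everyone is $(r-1)+\sum_{j=1}^{n-1} j \approx n^{2}/2$, while from the path midpoint $p_{m}$ ($m\approx n/2$) it is $\approx n^{2}/4 + O(rn)$, so the closeness optimum is near the middle of the path, not at $b$. Likewise, the betweenness of $b$ counts only the $(r-1)(n-1)$ clique--path pairs, whereas $p_{m}$ lies on $\approx n^{2}/4$ path--path geodesics; so the betweenness optimum is again the path midpoint. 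In other words, on your family closeness, betweenness and random-walk centrality all drift to the middle of the path, and only degree stays at the bridge --- the opposite of the separation you want. Your hedge at the end (``tune the clique so that degree, closeness and betweenness agree on it'') cannot repair this, because the disagreement is driven by the long path, not by the gadget at the end.

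The paper takes a quite different route. Rather than trying to force the three shortest-path measures to a vertex that random walk rejects, it builds a \emph{symmetric} graph $G_n^m$: a hub $v$ joined to $m$ copies of an $n$-clique (each through a single portal $k_n^i$) and to $m$ copies of a height-$1$ star (each through its root $t_n^i$). For the problem at size $k=m+1$, degree, closeness and betweenness are all \emph{indifferent} between the ``clique-portal'' set $S_K=\{v,k_n^1,\dots,k_n^m\}$ and the ``star-root'' set $S_T=\{v,t_n^1,\dots,t_n^m\}$ (and, for degree, further variants); by symmetry of shortest paths, both sets score identically. Random-walk centrality, however, breaks the tie: because hitting a portal from inside a well-connected clique is much faster than hitting a root from inside a star of the same size, $S_K$ is the unique random-walk optimum. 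So the ``systematic difference'' is realised not as disjoint optimizers but as random walk singling out one solution where the other three measures are provably tied. If you want to salvage a lollipop-style argument, you would need to reverse the asymmetry (dense part large, sparse part small) and redo the hitting-time analysis; the present version does not separate random walk from closeness or betweenness.
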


\begin{proof}

Consider $\{G_n^m\}$, with $n,m \in \mathbb{Z}^{+}$ and $n<<m$ a family of graphs constructed as follows (see Fig. \ref{allbadbutone}):  for $i \in \{1,...,m-1\}$, $K_n^i$ is a subgraph of $G_n^m$ that is a $n$-clique and that is connected to $v$ through an unique vertex $k_n^i$. On the other hand, $T_n^i$ is a subtree of $G_n^m$ of height 1 and size $n$ with root $t_i^n$ the only vertex in $T_n^i$ that is neighbor of $v$ in $G_n^m$. Suppose we want to find a set $S$ that solves $optimumset(G_n^m,m+1,C)$ for $C \in \{C_d,C_c,C_{bc},h^{\swarrow}\}$. It is not difficult to prove that for group closeness and betweenness centrality the sets $S_K=\{v,k_n^1,...,k_n^m\}$ and $S_T=\{v,t_n^1,...,t_n^m\}$ are solutions. For group degree centrality $\bar{S}_K=\{t_n^j,k_n^1,...,k_n^m\}$ and $\bar{S}_T=\{k_n^j,t_n^1,...,t_n^m\}$ for any $1\leq j \leq m$ are possible solutions. However, one could argue the difference in connectivity between sets $S_K$ and $S_T$ should be accounted for. Indeed, for random walk centrality, $S_k$ (a more connected version of $S_t$) is the only solution.
\qed



\end{proof}



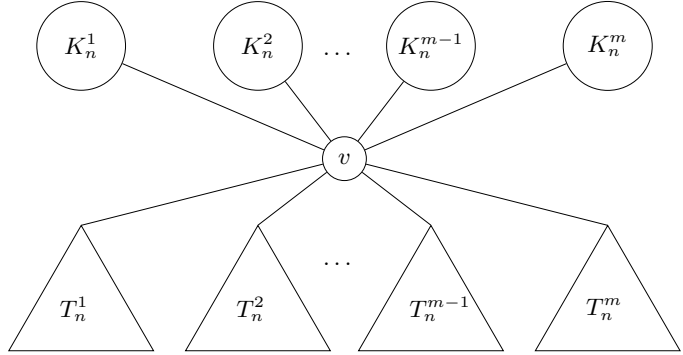
\begin{figure}[!h]
\centering
\begin{tikzpicture}[scale=1, every node/.style={scale=1}, line width=0.25pt,triangle/.style = {fill=white!20, regular polygon, regular polygon sides=3},node rotated/.style = {rotate=180},border rotated/.style = {shape border rotate=180}]
  \tikzstyle{vertex}=[align=center, inner sep=0pt, text centered, circle, black, draw=black, text width=3em]
	
	\node[vertex,text width=1.75em] (W-1) at (0,0) {$v$};
	\node[triangle,draw=black,text width=1.75em] (W-2) at (-3.5,-2) {$T_n^1$};
	\node[triangle,draw=black,text width=1.75em] (W-3) at (-1.15,-2) {$T_n^2$};
	\node[draw=white,text width=1.75em] (W-4) at (0,-1.4) {$\hdots$};
	\node[triangle,draw=black,text width=1.75em] (W-5) at (1.15,-2) {$T_n^{m-1}$};
	\node[triangle,draw=black,text width=1.75em] (W-6) at (3.5,-2) {$T_n^m$};
	\node[vertex,text width=3.5em] (W-7) at (-3.5,1.5) {$K_n^1$};
	\node[vertex,text width=3.5em] (W-8) at (-1.15,1.5) {$K_n^2$};
	\node[vertex,text width=3.5em] (W-9) at (1.15,1.5) {$K_n^{m-1}$};
	\node[vertex,text width=3.54em] (W-10) at (3.5,1.5) {$K_n^m$};
	\node[draw=white,text width=1.75em] (W-11) at (0,1.4) {$\hdots$};
	\node[] (W-a) at (-3.61,-0.915) {$$};
	\node[] (W-b) at (-1.26,-0.97) {$$};
	\node[] (W-c) at (3.61,-0.915) {$$};
	\node[] (W-d) at (1.26,-0.97) {$$};
	
  \foreach \from/\to in {1/a,1/b,1/c,1/d,1/7,1/8,1/9,1/10}
    \draw (W-\from) -- (W-\to);

\end{tikzpicture}
\caption{Which set of size $m+1$ is more central? $S_K$ or $S_T$?}
\label{allbadbutone}
\end{figure}



The following bound shows an interesting relation between random walk centrality, degree and closeness

\begin{proposition}
\label{bound}
 There exists a real constant $c>0$ such that for $S \subset V$
\[
h^{\swarrow}(S) \; \leq \;  \frac{1}{|V\setminus S|}\left(\sum_{u \in V\setminus S} d(u,S)\right)
                \left(\sum_{v \in V\setminus S} d(v)\right)  \leq c|V\setminus S|^3.
\]
\end{proposition}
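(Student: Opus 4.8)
The plan is to establish the two inequalities of the chain separately; the left one carries essentially all of the content, while the right one is a routine size estimate. For the left inequality I would first apply Lemma~\ref{np} to rewrite, for each $v\in V\setminus S$, the hitting time $H_{\scriptscriptstyle G}(v,S)$ as $H_{{\scriptscriptstyle G}_{\scriptscriptstyle S}}(v,v_{\scriptscriptstyle S})$, the hitting time of the contracted vertex $v_{\scriptscriptstyle S}$ in the weighted graph $G_{\scriptscriptstyle S}$. Two elementary facts about $G_{\scriptscriptstyle S}$ are then used repeatedly: (i) for $w\in V\setminus S$ the weighted degree in $G_{\scriptscriptstyle S}$ is $\omega_{\scriptscriptstyle S}(w)=|N_{\scriptscriptstyle G}(w)\setminus S|+|N_{\scriptscriptstyle G}(w)\cap S|=d_{\scriptscriptstyle G}(w)$; and (ii) $d_{{\scriptscriptstyle G}_{\scriptscriptstyle S}}(v,v_{\scriptscriptstyle S})=d_{\scriptscriptstyle G}(v,S)$, since a shortest $v$-$S$ path in $G$ truncated at its first vertex in $S$ maps to a $v$-$v_{\scriptscriptstyle S}$ path of the same length in $G_{\scriptscriptstyle S}$, and conversely.

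Next I would invoke the hitting-time/effective-resistance identity (see \cite{Lovasz})
\[
H_{{\scriptscriptstyle G}_{\scriptscriptstyle S}}(v,v_{\scriptscriptstyle S})=\frac12\sum_{w\in V({\scriptscriptstyle G}_{\scriptscriptstyle S})}\omega_{\scriptscriptstyle S}(w)\bigl(R(v,v_{\scriptscriptstyle S})+R(v_{\scriptscriptstyle S},w)-R(v,w)\bigr),
\]
where $R$ is the effective resistance in $G_{\scriptscriptstyle S}$ with edge $e$ carrying resistance $1/\omega_{\scriptscriptstyle S}(e)$. The decisive point is that the $w=v_{\scriptscriptstyle S}$ term vanishes, so the sum runs only over $w\in V\setminus S$, where $\omega_{\scriptscriptstyle S}(w)=d_{\scriptscriptstyle G}(w)$ by (i). Since effective resistance is a metric, $R(v_{\scriptscriptstyle S},w)-R(v,w)\le R(v,v_{\scriptscriptstyle S})$, so each summand is at most $2\,d_{\scriptscriptstyle G}(w)\,R(v,v_{\scriptscriptstyle S})$; and because $\omega_{\scriptscriptstyle S}\ge 1$ makes every edge resistance at most $1$, $R(v,v_{\scriptscriptstyle S})\le d_{{\scriptscriptstyle G}_{\scriptscriptstyle S}}(v,v_{\scriptscriptstyle S})=d_{\scriptscriptstyle G}(v,S)$ by (ii). Hence $H_{\scriptscriptstyle G}(v,S)\le d_{\scriptscriptstyle G}(v,S)\sum_{w\in V\setminus S}d_{\scriptscriptstyle G}(w)$, and summing over $v\in V\setminus S$ and dividing by $|V\setminus S|$ gives the left inequality.

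For the right inequality: each $d(u,S)\le|V\setminus S|$ (a shortest $u$-$S$ path visits at most $|V\setminus S|$ vertices outside $S$), so $\sum_{u\in V\setminus S}d(u,S)\le|V\setminus S|^2$, while $\sum_{v\in V\setminus S}d(v)\le\sum_{v\in V}d(v)=2|E|\le|V|^2$; the middle quantity is therefore at most $|V\setminus S|\cdot|V|^2$, which is $O(|V\setminus S|^3)$ in the regime of interest here where $|V\setminus S|=\Theta(|V|)$ (in particular $k\le|V|/2$ gives $c=4$). The only delicate point is in the left inequality: the textbook commute-time bound $H(u,v)\le 2|E|\,R(u,v)$ would lose both a factor $2$ and the restriction of the degree sum to $V\setminus S$; both are recovered precisely by contracting to $G_{\scriptscriptstyle S}$ and using that the $v_{\scriptscriptstyle S}$-summand of the resistance identity is zero.
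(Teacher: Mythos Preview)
Your argument for the left inequality is correct and takes a genuinely different route from the paper's. The paper does \emph{not} reuse the contracted graph of Lemma~\ref{np}; instead it introduces a second auxiliary graph that keeps the boundary $\partial(S)=\{v\in S:\exists\,u\in V\setminus S,\ uv\in E\}$ as separate vertices, observes $H_G(u,S)=H_{G_S}(u,\partial(S))\le H_{G_S}(u,v_u)$ for a suitably chosen $v_u\in\partial(S)$, and then applies the crude commute-time bound $H_L(u,v)\le 2|E(L)|\,d_L(u,v)$. Your approach---contracting $S$ to $v_{\scriptscriptstyle S}$ via Lemma~\ref{np}, invoking Tetali's resistance identity, and using the triangle inequality for effective resistance---is cleaner: the observation that the $w=v_{\scriptscriptstyle S}$ summand vanishes is exactly what restricts the degree sum to $V\setminus S$ with the correct constant, whereas the paper's edge-count step $2|E(G_S)|\le\sum_{v\in V\setminus S}d_G(v)$ is in fact off by the cross-edges and would strictly need an extra factor. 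Both approaches rely on the same two facts you state as (i) and (ii). One small remark: the identity you cite is usually attributed to Tetali rather than appearing explicitly in \cite{Lovasz}, though it is standard.

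For the right inequality you are also more explicit than the paper, which simply stops at the middle expression. Your caveat that the bound $c|V\setminus S|^3$ only holds in the regime $|V\setminus S|=\Theta(|V|)$ (e.g.\ $k\le|V|/2$, giving $c=4$) is the honest reading of the statement; without such a restriction the inequality fails as written, since $\sum_{v\in V\setminus S}d(v)$ can be of order $|V|$ even when $|V\setminus S|$ is bounded.
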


\begin{proof}
Define $\partial(S):= \{v \in S: \exists u \in V \setminus S, uv \in E \}$. Define the subgraph $G_S$ of $G$ as follows:
$ V(G_S) := (V \setminus S) \cup \partial(S)$ and $E(G_S) := \{ uv \in E: u,v  \in V(G_S) \wedge u \in V \setminus S \} $.

First note that for any pair of vertices $u \in V \setminus S$ and $v \in \partial(S)$
\[
H_G(u,S) = H_{G_S}(u,\partial(S)) \leq H_{G_S}(u,v).
\]
The equality follows from the fact that the subgraph traversed and the target hit in both expressions are the same, and the inequality holds because the target of the expression on the right is more difficult to hit, thus needs more walking around.

Using this, we have that
\[
h^{\swarrow}(S)= \frac{\sum_{u \in V \setminus S} H_G(u,S)}{|V \setminus S|}
   \leq \frac{\sum_{u \in V \setminus S} H_{G_S}(u,v_u)}{|V \setminus S|},
\]
 with each $v_u \in \partial(S)$ chosen such that $d_{G_S}(u,v_u) = d_{G}(u,S)$.

On the other hand, by using the known fact that for any graph $L$ and any pair of vertices $u,v \in V(L)$, $H_L(u,v) \leq 2|E(L)|d_L(u,v)$, we have that
\begin{align*}
\frac{\sum_{u \in V \setminus S} H_{G_S}(u,v)}{|V \setminus S|} &\leq
                                                                  \frac{\sum_{u
                                                                  \in
                                                                  V
                                                                  \setminus S} 2|E(G_S)| d_{G_S}(u,v_u)}{|V\setminus S|}\\														&
\leq \frac{1}{|V\setminus S|}\left(\sum_{u \in V\setminus
                                                                                                                                                                                                                                          S}d_G(u,S)\right)\left(\sum_{v \in V\setminus S}d_G(v) \right).  \;\qed
\end{align*}
\end{proof}

\section{Conclusions}
 We perform a SWOT analysis of the application of the notion of
group centrality to semantic networks, putting a particular
focus on random walk group centrality.
The challenges faced were mainly of two types. First,
the ambiguity of the very notion of the
``most central''  concept in a semantic network;
and second, the size of the networks.

 In spite of these obstacles (that we discussed in the paper), we hope we
presented enough evidence that this topic  deserves to be explored and
researched. We approached these obstacles by working with small samples and
advancing theoretical results.  Besides stating the problem at the formal level,
we proved that the problem of computing group centrality is NP-hard, and showed
that random walk centrality comes up as a good candidate  for capturing the
notion of centrality on semantic networks.


 The above opens at least two lines of research on this topic: First,
to find good approximation algorithms to compute group centrality
particularly for random walk centrality. Second, the task of building
a good benchmark of big semantic networks together with their
candidate sets to match the most important concepts in the network.


\newpage
\appendix

\section{Appendix}

\begin{figure}[!ht]
  \centering
    \includegraphics[width=0.7\textwidth]{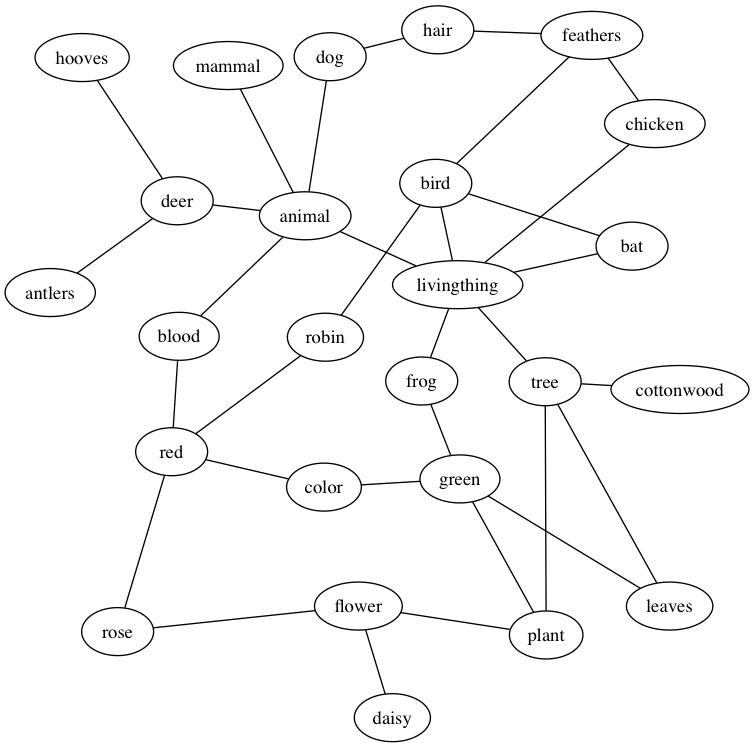}
  \caption{Proximity Semantic Network 1.}
\end{figure}

Proximity Semantic Network 1 Labels:
\begin{multicols}{2}
\begin{itemize}
    \item 0.- animal
    \item 1.- antlers
    \item 2.- bat
    \item 3.- bird
    \item 4.- blood
    \item 5.- chicken
    \item 6.- color
    \item 7.- cottonwood
    \item 8.- daisy
    \item 9.- deer
    \item 10.- dog
    \item 11.- feathers
    \item 12.- flower
    \item 13.- frog
    \item 14.- green
    \item 15.- hair
    \item 16.- hooves
    \item 17.- leaves
    \item 18.- livingthing
    \item 19.- mammal
    \item 20.- plant
    \item 21.- red
    \item 22.- robin
    \item 23.- rose
    \item 24.- tree
\end{itemize}
\end{multicols}

\begin{table}[h!]
        \centering
        {
        \scriptsize
        \begin{tabularx}{\textwidth}{| c | Y | Y | Y || Y |}
            \hline
            ~n & degree & closeness & betweenness & random-walk \\ \hline
            1 & \{11\} & \{11\} & \{11\} & \{11\} \\ \hline
            2 & \{4, 11\} & \{2, 13\} , \{4, 11\} & \{4, 11\} & \{4, 11\} \\ \hline
            3 & \{4, 11, 13\} , \{0, 11, 13\} , ... (9) & \{0, 11, 13\} , \{4, 11, 13\} , ... (4) & \{4, 11, 13\} & \{0, 11, 13\} \\ \hline
            4 & \{2, 4, 11, 13\} , ... (23) & \{2, 4, 11, 13\} , ... (5)] & \{2, 4, 11, 13\} & \{2, 4, 11, 13\} \\ \hline
        \end{tabularx}
        }
        \smallskip
        \caption{
            Solutions for group centrality on Proximity Semantic Network 1.
        }
        \label{table:proximity}
    \end{table}

\begin{figure}[!ht]
  \centering
    \includegraphics[width=0.7\textwidth]{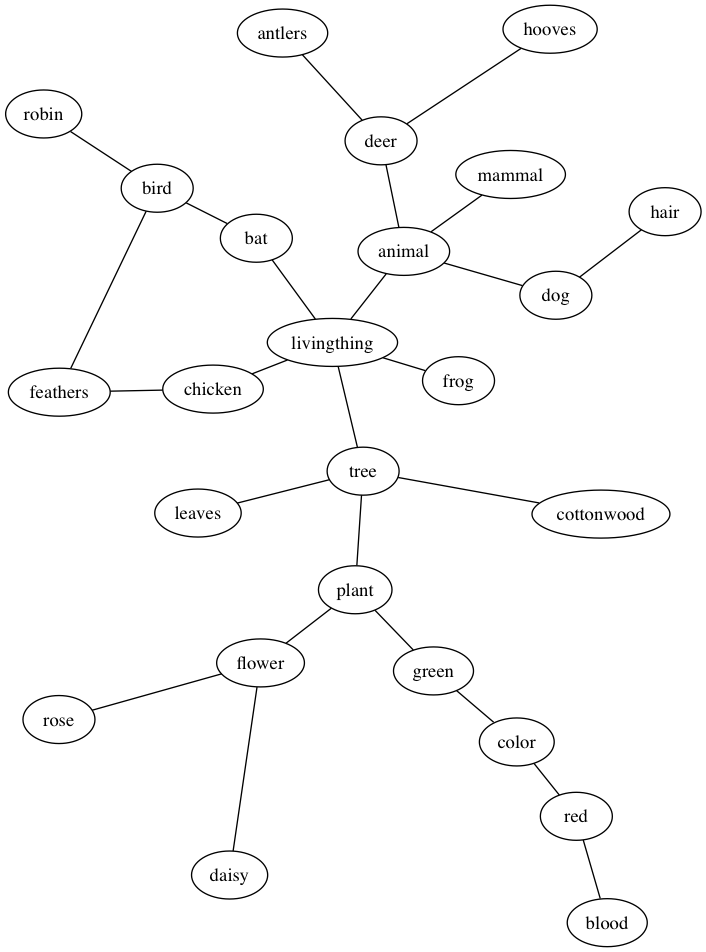}
  \caption{Proximity Semantic Network 2.}
\end{figure}
\vspace{2cm}
Proximity Semantic Network 2 Labels:
\begin{multicols}{2}
\begin{itemize}
    \item 0.- animal
    \item 1.- antlers
    \item 2.- bat
    \item 3.- bird
    \item 4.- blood
    \item 5.- chicken
    \item 6.- color
    \item 7.- cottonwood
    \item 8.- daisy
    \item 9.- deer
    \item 10.- dog
    \item 11.- feathers
    \item 12.- flower
    \item 13.- frog
    \item 14.- green
    \item 15.- hair
    \item 16.- hooves
    \item 17.- leaves
    \item 18.- livingthing
    \item 19.- mammal
    \item 20.- plant
    \item 21.- red
    \item 22.- robin
    \item 23.- rose
    \item 24.- tree

\end{itemize}
\end{multicols}

    \begin{table}[h!]
        \centering
        {
        \scriptsize
        \begin{tabularx}{\textwidth}{| c | Y | Y | Y || Y |}
            \hline
~n~ & degree & closeness & betweenness & random-walk \\ \hline
1 & \{11\} & \{11\} & \{11\} & \{11\} \\ \hline
2 & \{0, 11\} & \{2, 11\} , ... (2) & \{2, 11\} & \{2, 11\} \\ \hline
3 & \{2, 13, 21\} , ... (8) & \{2, 13, 21\} & \{2, 11, 13\} & \{2, 13, 21\} \\ \hline
4 & \{2, 4, 13, 21\} , ... (16) & \{2, 4, 13, 21\} & \{2, 11, 13, 21\} & \{2, 4, 13, 21\} \\ \hline
        \end{tabularx}
        }
        \smallskip
        \caption{
            Solutions for group centrality on Proximity Semantic Network 2.
        }
        \label{table:proximity11}
    \end{table}


\begin{figure}[!ht]
  \centering
    \includegraphics[width=0.7\textwidth]{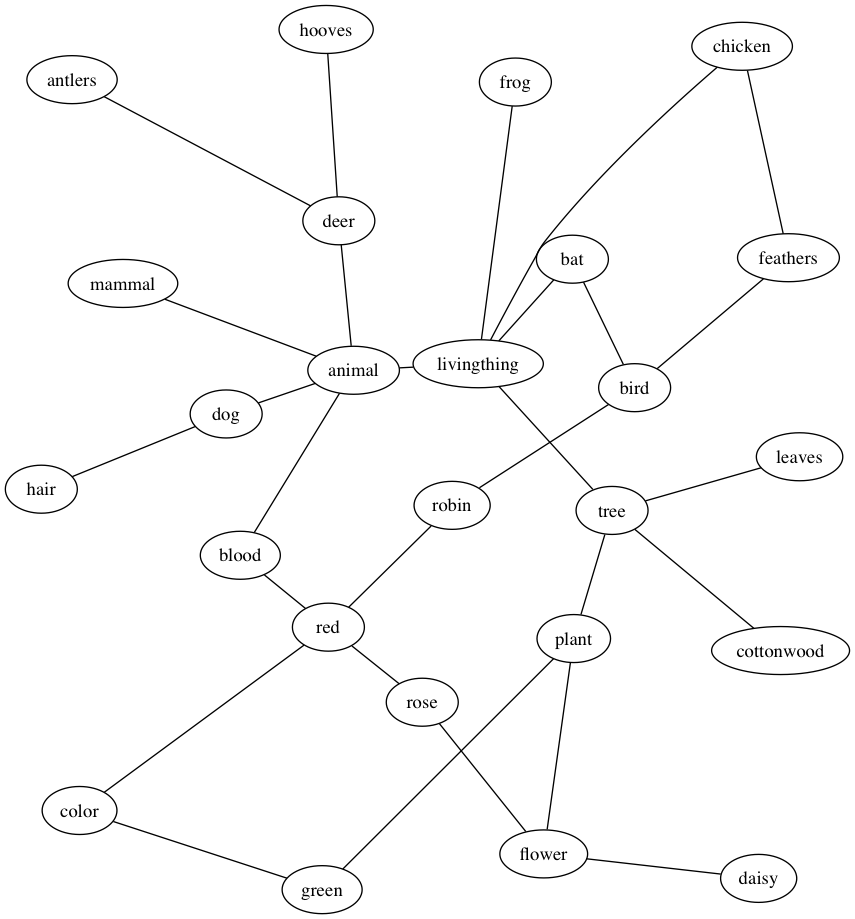}
  \caption{Proximity network built by non-specialists.}
\end{figure}

\newpage
Proximity network built by non-specialists labels:
\begin{multicols}{2}
\begin{itemize}
    \item 0.- animal
    \item 1.- antlers
    \item 2.- bat
    \item 3.- bird
    \item 4.- blood
    \item 5.- chicken
    \item 6.- color
    \item 7.- cottonwood
    \item 8.- daisy
    \item 9.- deer
    \item 10.- dog
    \item 11.- feathers
    \item 12.- flower
    \item 13.- frog
    \item 14.- green
    \item 15.- hair
    \item 16.- hooves
    \item 17.- leaves
    \item 18.- livingthing
    \item 19.- mammal
    \item 20.- plant
    \item 21.- red
    \item 22.- robin
    \item 23.- rose
    \item 24.- tree
\end{itemize}

\end{multicols}

    \begin{table}[h!]
        \centering
        {
        \scriptsize
        \begin{tabularx}{\textwidth}{| c | Y | Y | Y || Y |}
            \hline
~n~ & degree & closeness & betweenness & random-walk \\ \hline
1 & \{6\} , \{17\} & \{6\} & \{17\} & \{6\} \\ \hline
2 & \{0, 6\} & \{0, 6\} , \{10, 17\} & \{0, 6\} & \{0, 6\} \\ \hline
3 & \{3, 13, 17\} , \{3, 10, 17\}, ... (10) & \{3, 10, 17\} & \{0, 13, 17\} & \{3, 10, 17\} \\ \hline
4 & \{0, 6, 10, 17\} , ... (18) & \{0, 6, 10, 17\} , ... (6) & \{0, 6, 10, 17\} & \{0, 6, 10, 17\} \\ \hline
        \end{tabularx}
        }
        \smallskip
        \caption{
          Solutions for group centrality on Proximity Semantic Network constructed by non-biologist students.
        }
        \label{table:proximity2}
    \end{table}

\begin{figure}[!ht]
  \centering
    \includegraphics[width=0.5\textwidth]{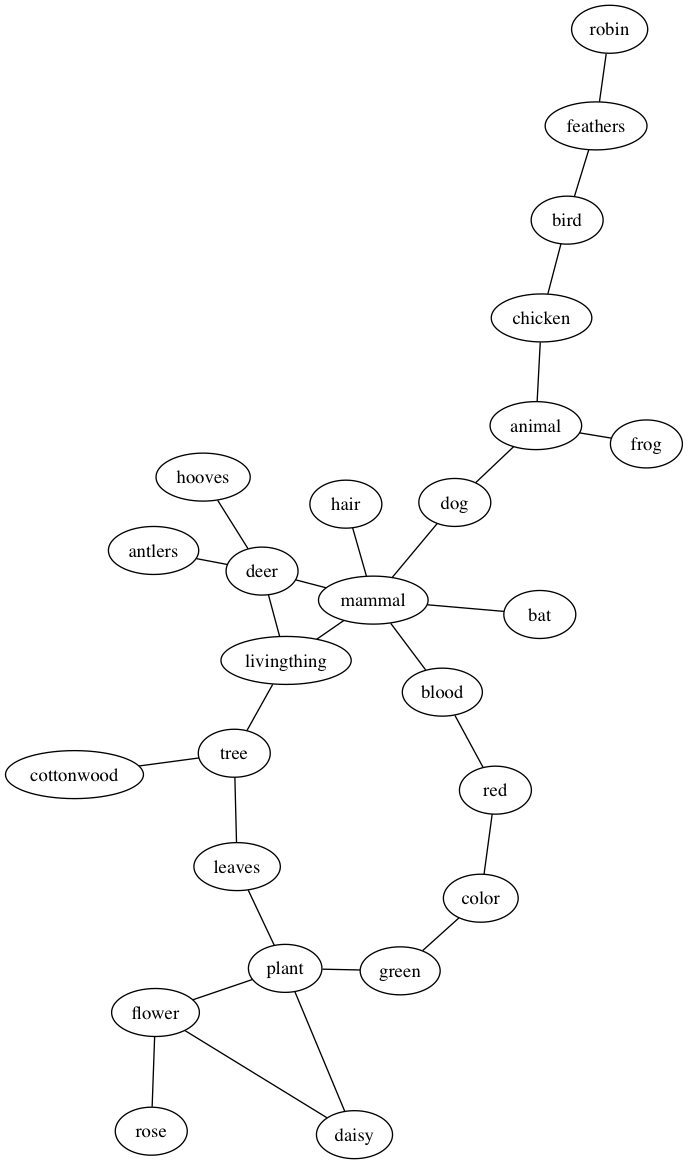}
  \caption{Proximity network built by biologists.}
\end{figure}
\newpage
Proximity network built by biologists labels:
\begin{multicols}{2}
\begin{itemize}
    \item 0.- animal
    \item 1.- antlers
    \item 2.- bat
    \item 3.- bird
    \item 4.- blood
    \item 5.- chicken
    \item 6.- color
    \item 7.- cottonwood
    \item 8.- daisy
    \item 9.- deer
    \item 10.- dog
    \item 11.- feathers
    \item 12.- flower
    \item 13.- frog
    \item 14.- green
    \item 15.- hair
    \item 16.- hooves
    \item 17.- leaves
    \item 18.- livingthing
    \item 19.- mammal
    \item 20.- plant
    \item 21.- red
    \item 22.- robin
    \item 23.- rose
    \item 24.- tree
\end{itemize}
\end{multicols}

    \begin{table}[h!]
        \centering
        {
        \scriptsize
        \begin{tabular}{| c | c | c | c || c |}
            \hline
            ~n~ & degree & closeness & betweenness & random-walk \\ \hline
1 & \{12\} & \{12\} & \{12\} & \{12\} \\ \hline
2 & \{3, 12\} & \{3, 12\} & \{3, 12\} & \{3, 12\} \\ \hline
3 & \{3, 12, 19\}, \{3, 12, 21\} & \{3, 12, 21\}, \{3, 12, 22\} & \{3, 12, 19\}, \{3, 12, 21\} & \{3, 12, 22\} \\
 & \{3, 12, 22\}, \{3, 12, 23\} & \{3, 12, 23\}  &  &  \\ \hline
4 & \{3, 12, 19, 23\} & \{3, 12, 19, 23\} & \{3, 12, 15, 19\}, \{3, 12, 15, 21\} & \{3, 5, 12, 22\} \\ \hline

        \end{tabular}
        }
        \smallskip
        \caption{
          Solutions for group centrality on Proximity Semantic Network constructed by biologist students.
        }
        \label{table:proximity3}
    \end{table}


\begin{figure}[!ht]
  \centering
    \includegraphics[width=0.7\textwidth]{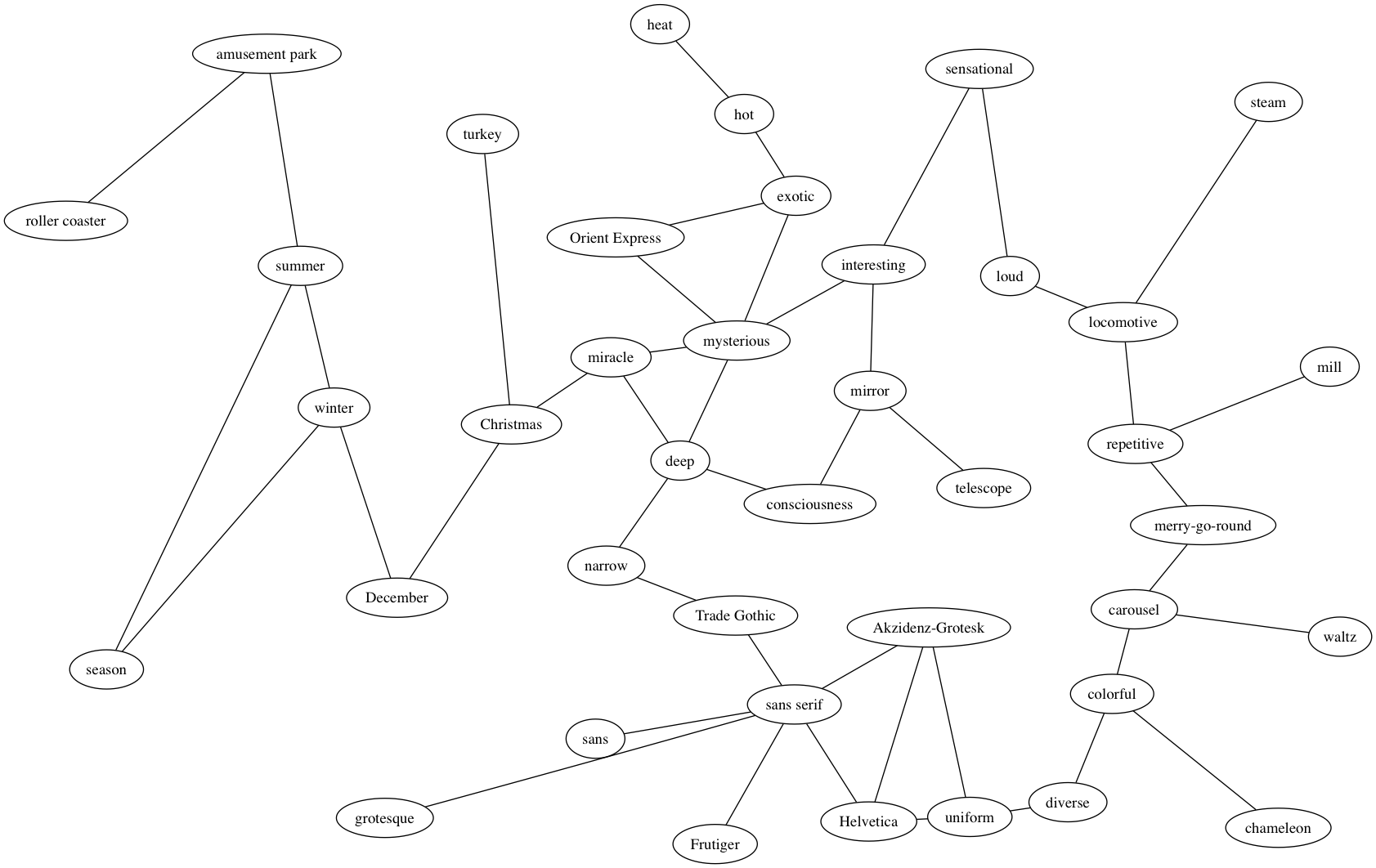}
  \caption{Sample 1 from Perception Network.}
\end{figure}

\newpage
Sample 1 from Perception network labels:

\begin{multicols}{2}
\begin{itemize}
    \item 0.- Akzidenz-Grotesk
    \item 1.- Christmas
    \item 2.- December
    \item 3.- Frutiger
    \item 4.- Helvetica
    \item 5.- Orient Express
    \item 6.- Trade Gothic
    \item 7.- amusement park
    \item 8.- carousel
    \item 9.- chameleon
    \item 10.- colorful
    \item 11.- consciousness
    \item 12.- deep
    \item 13.- diverse
    \item 14.- exotic
    \item 15.- grotesque
    \item 16.- heat
    \item 17.- hot
    \item 18.- interesting
    \item 19.- locomotive
    \item 20.- loud
    \item 21.- merry-go-round
    \item 22.- mill
    \item 23.- miracle
    \item 24.- mirror
    \item 25.- mysterious
    \item 26.- narrow
    \item 27.- repetitive
    \item 28.- roller coaster
    \item 29.- sans
    \item 30.- sans serif
    \item 31.- season
    \item 32.- sensational
    \item 33.- steam
    \item 34.- summer
    \item 35.- telescope
    \item 36.- turkey
    \item 37.- uniform
    \item 38.- waltz
    \item 39.- winter
\end{itemize}
\end{multicols}

\begin{table}[h!]
        \centering
        {
        \scriptsize
        \begin{tabular}{| c | c | c | c || c |}
            \hline
~n~ & degree & closeness & betweenness & random-walk \\ \hline
1 & \{25\} & \{14\} & \{14\} & \{13\} \\ \hline
2 & \{14, 25\} & \{5, 14\} , \{7, 13\} , \{7, 14\} , \{8, 13\} & \{14, 25\} & \{6, 14\} \\ \hline
3 & \{4, 14, 25\} , ... (5) & \{4, 14, 25\} , ... (4) & \{4, 14, 25\} & \{0, 6, 14\} \\ \hline
4 & \{0, 4, 14, 25\} , ... (10) & \{1, 4, 14, 25\} , ... (2) & \{4, 12, 14, 25\} & \{0, 4, 14, 25\} \\ \hline
        \end{tabular}
        }
        \smallskip
        \caption{
            Solutions for group centrality on Perception sample 1.
        }
        \label{table:perception1}
    \end{table}


\begin{figure}[!ht]
  \centering
    \includegraphics[width=0.7\textwidth]{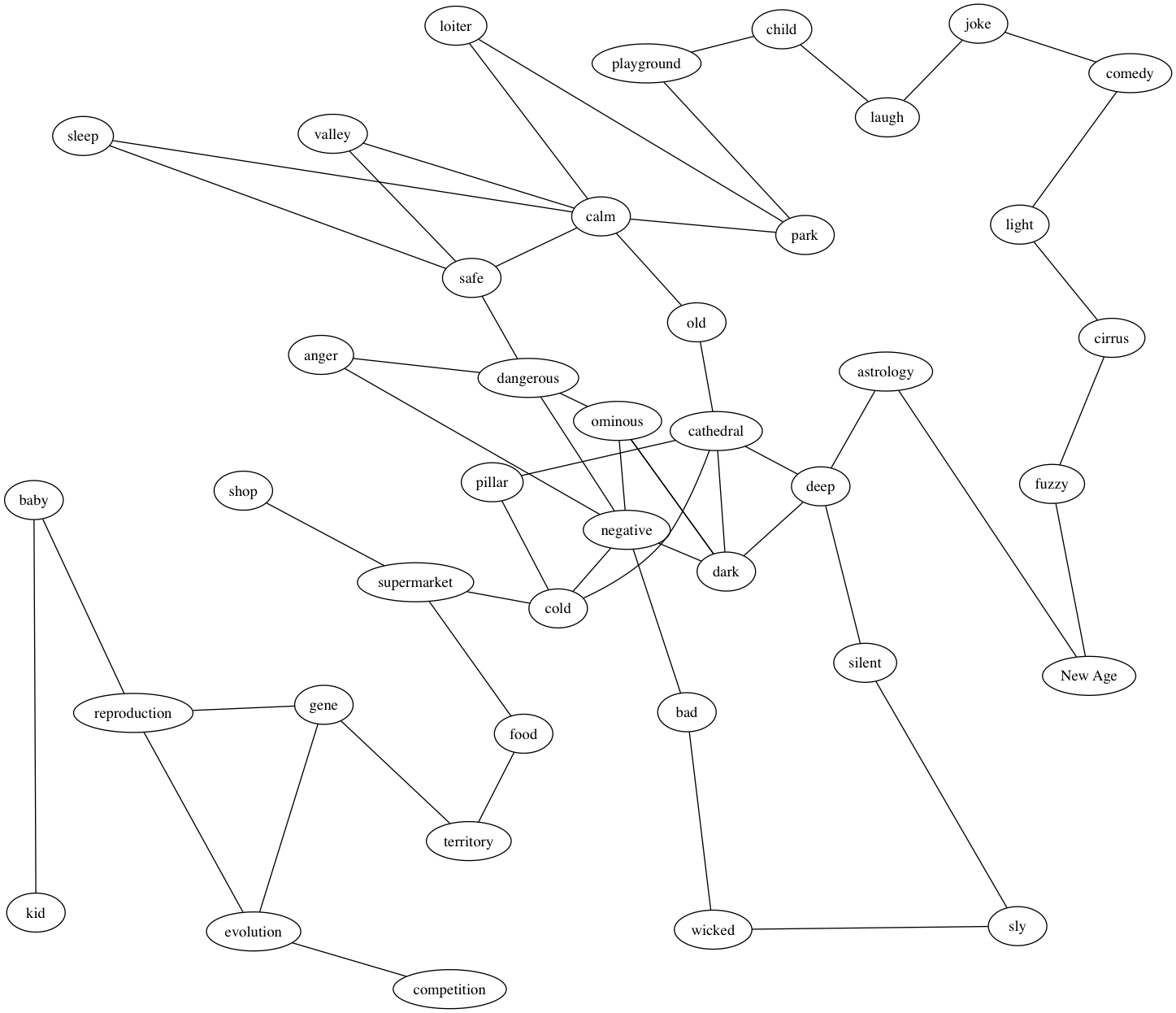}
  \caption{Sample 2 from Perception Network.}
\end{figure}

Sample 2 from Perception network labels:
\begin{multicols}{2}
\begin{itemize}
\item 0.- New Age
\item 1.- anger
\item 2.- astrology
\item 3.- baby
\item 4.- bad
\item 5.- calm
\item 6.- cathedral
\item 7.- child
\item 8.- cirrus
\item 9.- cold
\item 10.- comedy
\item 11.- competition
\item 12.- dangerous
\item 13.- dark
\item 14.- deep
\item 15.- evolution
\item 16.- food
\item 17.- fuzzy
\item 18.- gene
\item 19.- joke
\item 20.- kid
\item 21.- laugh
\item 22.- light
\item 23.- loiter
\item 24.- negative
\item 25.- old
\item 26.- ominous
\item 27.- park
\item 28.- pillar
\item 29.- playground
\item 30.- reproduction
\item 31.- safe
\item 32.- shop
\item 33.- silent
\item 34.- sleep
\item 35.- sly
\item 36.- supermarket
\item 37.- territory
\item 38.- valley
\item 39.- wicked
\end{itemize}
\end{multicols}

    \begin{table}[h!]
        \centering
        {
        \scriptsize
        \begin{tabular}{| c | c | c | c || c |}
            \hline
~n~ & degree & closeness & betweenness & random-walk \\ \hline
1 & \{29\} , ... (1) & \{15\} & \{1\} & \{29\} \\ \hline
2 & \{2, 29\} & \{0, 15\} , ... (1) & \{1, 2\} & \{29, 39\} \\ \hline
3 & \{2, 8, 29\} , ... (3) & \{0, 15, 32\} , ... (7) & \{1, 2, 8\} & \{2, 8, 39\} \\ \hline
4 & \{2, 8, 29, 39\} , ... (2) & \{2, 26, 29, 39\} , ... (1) & \{2, 8, 10, 29\} & \{2, 26, 29, 39\} \\ \hline
        \end{tabular}
        }
        \smallskip
        \caption{
            Solutions for group centrality on Perception sample 2.
        }
        \label{table:perception2}
    \end{table}

\begin{figure}[!ht]
  \centering
    \includegraphics[width=0.7\textwidth]{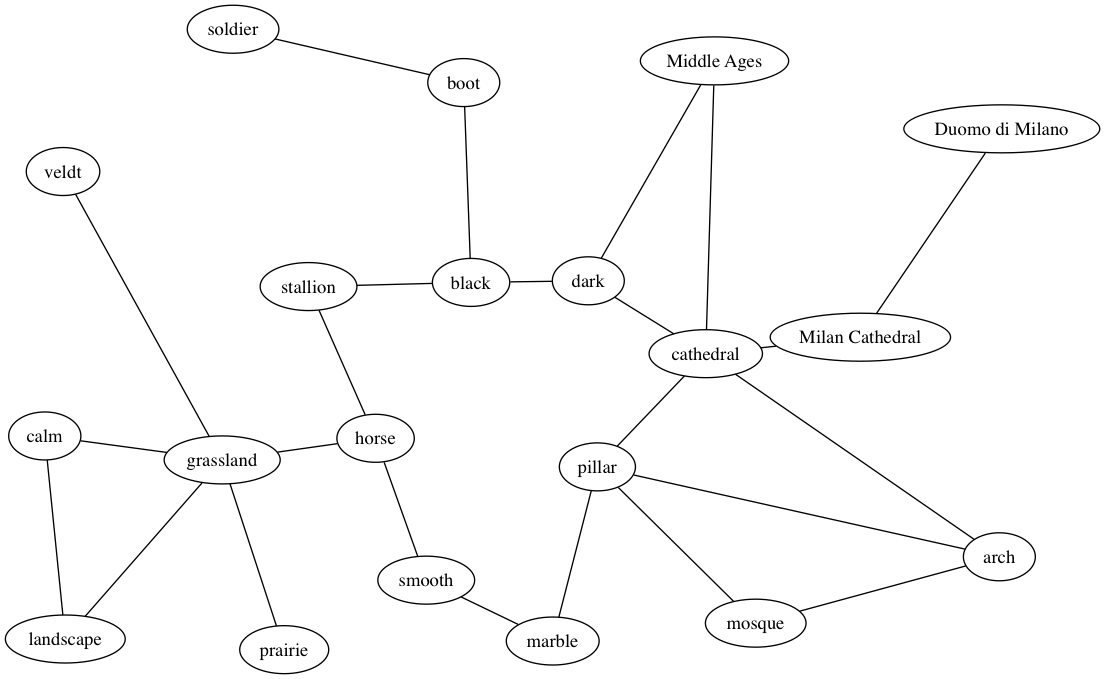}
  \caption{Sample 3 from Perception Network.}
\end{figure}

Sample 3 from Perception network labels:

\begin{multicols}{2}
\begin{itemize}
    \item 0.- Duomo di Milano
    \item 1.- Middle Ages
    \item 2.- Milan Cathedral
    \item 3.- arch
    \item 4.- black
    \item 5.- boot
    \item 6.- calm
    \item 7.- cathedral
    \item 8.- dark
    \item 9.- grassland
    \item 10.- horse
    \item 11.- landscape
    \item 12.- marble
    \item 13.- mosque
    \item 14.- pillar
    \item 15.- prairie
    \item 16.- smooth
    \item 17.- soldier
    \item 18.- stallion
    \item 19.- veldt
\end{itemize}

\end{multicols}

     \begin{table}[h]
         \centering
         {
         \scriptsize
         \begin{tabular}{| c | c | c | c || c |}
             \hline
 ~n~ & degree & closeness & betweenness & random-walk \\ \hline
 1 & \{11\} & \{11\} & \{11\} & \{11\} \\ \hline
 2 & \{4, 11\} & \{2, 13\} , \{4, 11\} & \{4, 11\} & \{4, 11\} \\ \hline
 3 & \{0, 11, 13\} , \{4, 11, 13\} , ... (9) & \{0, 11, 13\} , \{4, 11, 13\} , ... (4) & \{4, 11, 13\} & \{0, 11, 13\} \\ \hline
 4 & \{2, 4, 11, 13\} , ... (23) & \{2, 4, 11, 13\} , ... (5) & \{2, 4, 11, 13\} & \{2, 4, 11, 13\} \\ \hline
         \end{tabular}
         }
         \smallskip
         \caption{
             Solutions for group centrality on Perception sample 3.
         }
         \label{table:perception3}
     \end{table}

\newpage
\begin{figure}[!ht]
  \centering
    \includegraphics[width=0.7\textwidth]{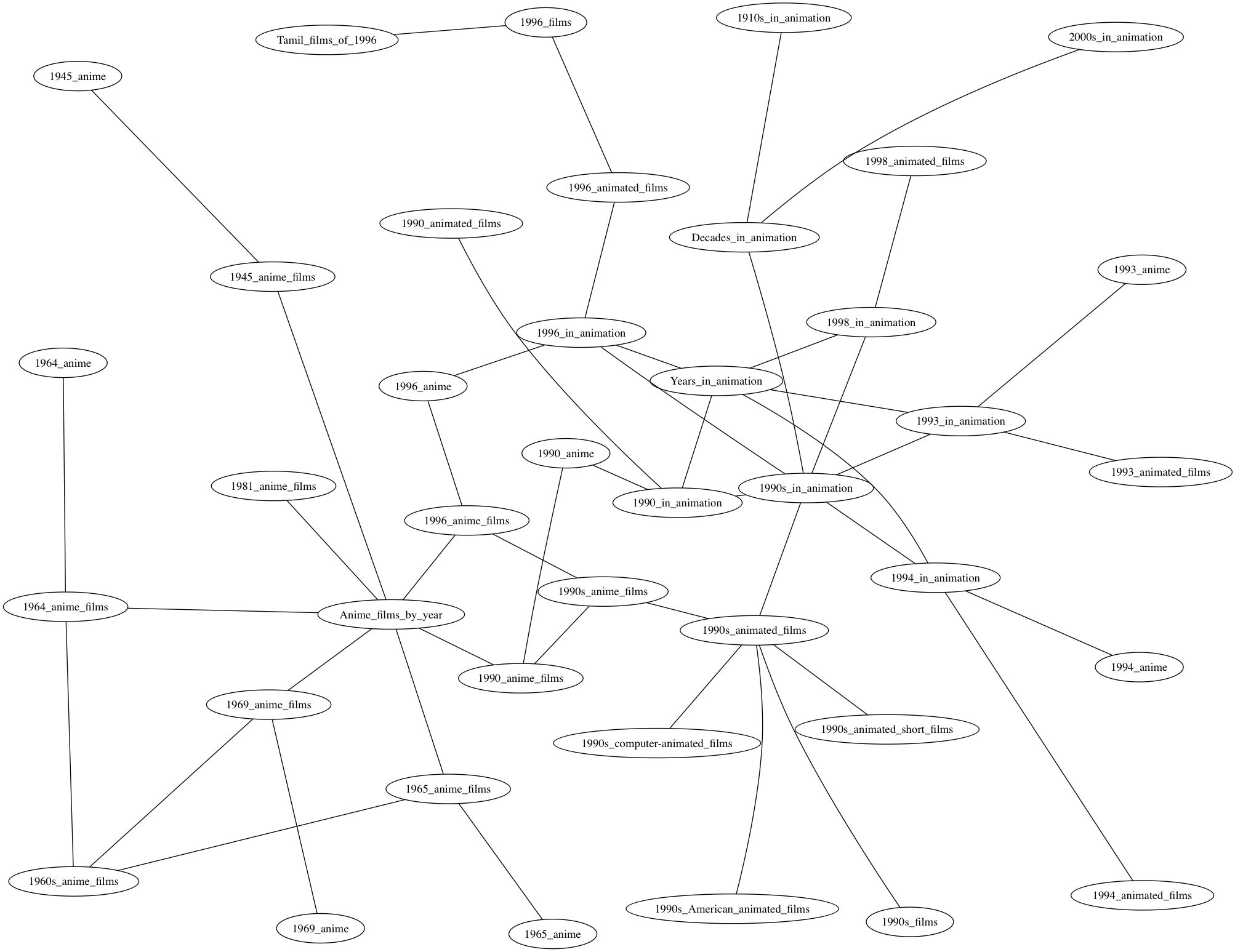}
  \caption{Sample 1 from DBpedia Categories network.}
\end{figure}
\vspace{2cm}
Sample 1 from DBpedia Categories network labels:

\begin{multicols}{2}
\begin{itemize}
    \item 0.- 1996\_films
    \item 1.- 1990s\_films
    \item 2.- 2000s\_in\_animation
    \item 3.- Decades\_in\_animation
    \item 4.- 1990s\_in\_animation
    \item 5.- 1910s\_in\_animation
    \item 6.- 1990s\_animated\_films
    \item 7.- Years\_in\_animation
    \item 8.- 1964\_anime
    \item 9.- 1965\_anime
    \item 10.- 1969\_anime
    \item 11.- 1990\_anime
    \item 12.- 1990\_in\_animation
    \item 13.- 1993\_anime
    \item 14.- 1993\_in\_animation
    \item 15.- 1994\_anime
    \item 16.- 1994\_in\_animation
    \item 17.- 1996\_anime
    \item 18.- 1996\_in\_animation
    \item 19.- 1998\_in\_animation
    \item 20.- Anime\_films\_by\_year
    \item 21.- 1990\_anime\_films
    \item 22.- 1990s\_anime\_films
    \item 23.- 1996\_anime\_films
    \item 24.- 1981\_anime\_films
    \item 25.- 1960s\_anime\_films
    \item 26.- 1998\_animated\_films
    \item 27.- 1996\_animated\_films
    \item 28.- 1994\_animated\_films
    \item 29.- 1993\_animated\_films
    \item 30.- 1990\_animated\_films
    \item 31.- 1965\_anime\_films
    \item 32.- 1969\_anime\_films
    \item 33.- 1945\_anime\_films
    \item 34.- 1945\_anime
    \item 35.- 1990s\_American\_animated\_films
    \item 36.- 1990s\_animated\_short\_films
    \item 37.- Tamil\_films\_of\_1996
    \item 38.- 1964\_anime\_films
    \item 39.- 1990s\_computer-animated\_films
\end{itemize}

\end{multicols}

    \begin{table}[h]
        \centering
        {
        \scriptsize
        \begin{tabular}{| c | c | c | c || c |}
            \hline
~n~ & degree & closeness & betweenness & random-walk \\ \hline
1 & \{4\} , \{20\} & \{4\} & \{20\} & \{4\} \\ \hline
2 & \{4, 20\} & \{4, 20\} & \{4, 20\} & \{4, 20\} \\ \hline
3 & \{4, 6, 20\} , \{6, 7, 20\} & \{0, 4, 20\} & \{4, 7, 20\} & \{4, 18, 20\} \\ \hline
4 & \{4, 6, 18, 20\} , \{0, 4, 6, 20\} & \{0, 4, 6, 20\} & \{4, 6, 7, 20\} & \{4, 6, 18, 20\} \\ \hline
        \end{tabular}
        }
        \smallskip
        \caption{
            Solutions for group centrality on DBpedia sample 1.
        }
        \label{table:dbpedia1}
    \end{table}


\begin{figure}[!ht]
  \centering
    \includegraphics[width=0.7\textwidth]{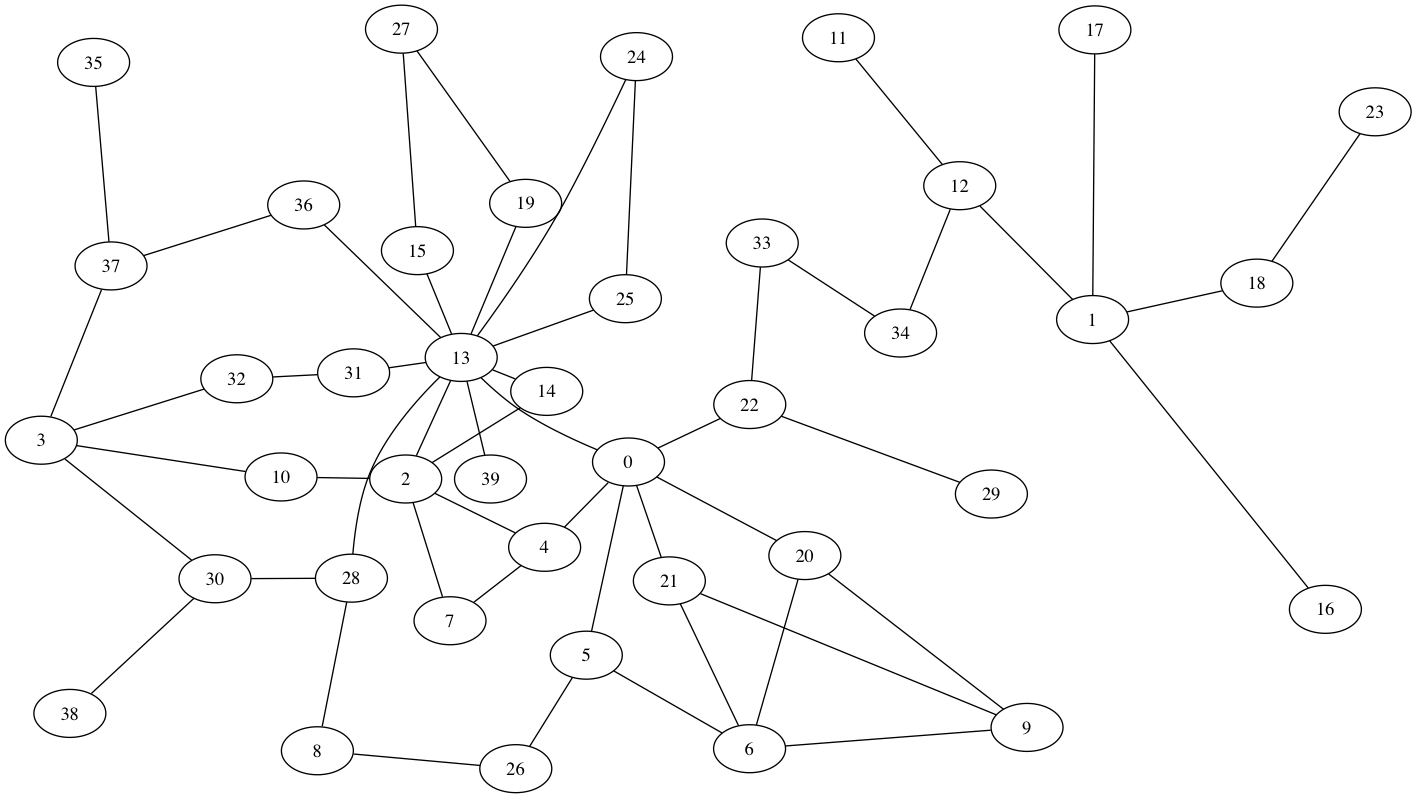}
  \caption{Sample 2 from DBpedia Categories network.}
\end{figure}
\newpage
Sample 2 from DBpedia Categories network labels:
\begin{itemize}
    \item 0.- Wikipedia\_categories\_named\_after\_populated\_places\_in\_Italy
    \item 1.- Provinces\_of\_Italy
    \item 2.- Cities\_and\_towns\_in\_Emilia-Romagna
    \item 3.- Emilia-Romagna
    \item 4.- Parma
    \item 5.- Turin
    \item 6.- Cities\_and\_towns\_in\_Piedmont
    \item 7.- Communes\_of\_the\_Province\_of\_Parma
    \item 8.- Education\_in\_Italy\_by\_city
    \item 9.- Communes\_of\_the\_Province\_of\_Alessandria
    \item 10.- Geography\_of\_Emilia-Romagna
    \item 11.- Province\_of\_Pavia\_geography\_stubs
    \item 12.- Province\_of\_Pavia
    \item 13.- Bologna
    \item 14.- Communes\_of\_the\_Province\_of\_Bologna
    \item 15.- Buildings\_and\_structures\_in\_Bologna
    \item 16.- Province\_of\_Arezzo
    \item 17.- Province\_of\_Ancona
    \item 18.- Province\_of\_Lucca
    \item 19.- Visitor\_attractions\_in\_Bologna
    \item 20.- Tortona
    \item 21.- Novi\_Ligure
    \item 22.- Pavia
    \item 23.- Communes\_of\_the\_Province\_of\_Lucca
    \item 24.- Roman\_Catholic\_archbishops\_of\_Bologna
    \item 25.- History\_of\_Bologna
    \item 26.- Education\_in\_Turin
    \item 27.- Palaces\_in\_Bologna
    \item 28.- Education\_in\_Bologna
    \item 29.- University\_of\_Pavia
    \item 30.- Education\_in\_Emilia-Romagna
    \item 31.- Culture\_in\_Bologna
    \item 32.- Culture\_in\_Emilia-Romagna
    \item 33.- Buildings\_and\_structures\_in\_Pavia
    \item 34.- Buildings\_and\_structures\_in\_the\_Province\_of\_Pavia
    \item 35.- Companies\_by\_region\_of\_Italy
    \item 36.- Companies\_based\_in\_Bologna
    \item 37.- Companies\_based\_in\_Emilia-Romagna
    \item 38.- Education\_in\_Italy\_by\_region
    \item 39.- Media\_in\_Bologna
\end{itemize}

    \begin{table}[h!]
        \centering
        {
        \scriptsize
        \begin{tabular}{| c | c | c | c || c |}
            \hline
~n~ & degree & closeness & betweenness & random-walk \\ \hline
1 & \{13\} & \{13\} & \{13\} & \{13\} \\ \hline
2 & \{0, 13\} , \{1, 13\} , ... (2) & \{1, 13\} & \{0, 13\} & \{12, 13\} \\ \hline
3 & \{0, 1, 13\} , ... (4) & \{0, 1, 13\} & \{0, 1, 13\} & \{0, 1, 13\} \\ \hline
4 & \{0, 1, 3, 13\} , ... (1) & \{0, 1, 3, 13\} & \{0, 1, 3, 13\} & \{0, 1, 3, 13\} \\ \hline
        \end{tabular}
        }
        \smallskip
        \caption{
            Solutions for group centrality on DBpedia sample 2.
        }
        \label{table:dbpedia2}
    \end{table}


\begin{figure}[!ht]
  \centering
    \includegraphics[width=0.7\textwidth]{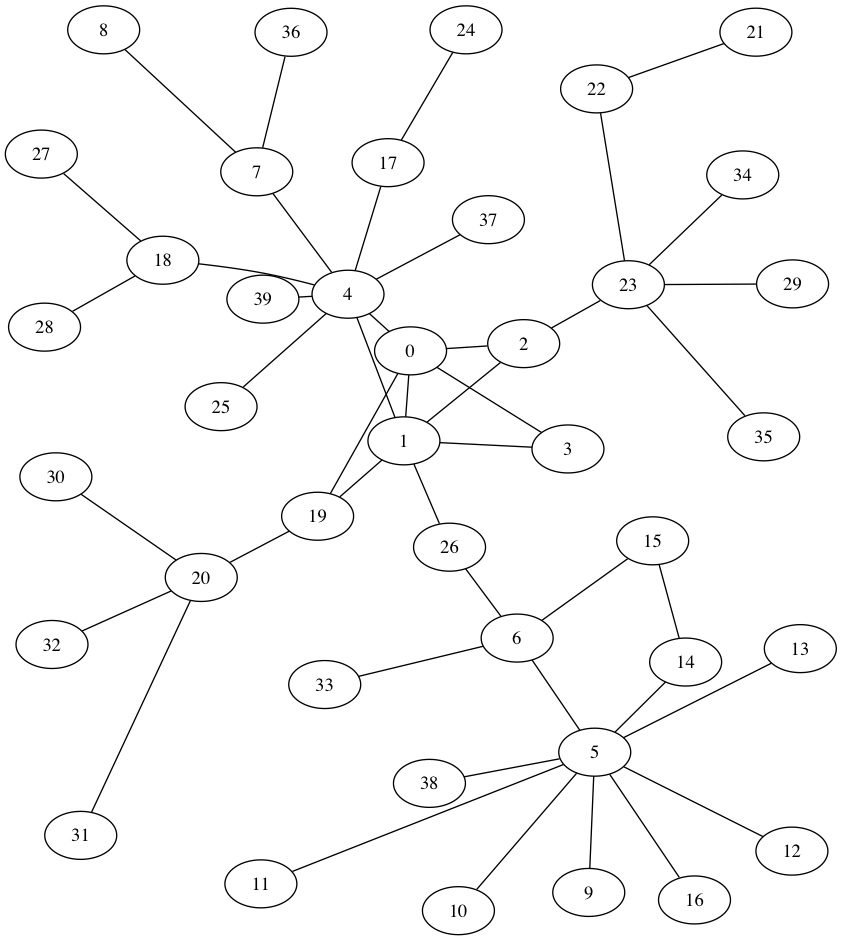}
  \caption{Sample 3 from DBpedia Categories network.}
\end{figure}
\newpage
Sample 3 from DBpedia Categories network labels:

\begin{multicols}{2}
\begin{itemize}
  \item 0.- Neognathae
    \item 1.- Birds\_by\_classification
    \item 2.- Procellariiformes
    \item 3.- Phoenicopteriformes
    \item 4.- Pelecaniformes
    \item 5.- Hornbills
    \item 6.- Bucerotidae
    \item 7.- Ardeidae
    \item 8.- Egretta
    \item 9.- Aceros
    \item 10.- Anthracoceros
    \item 11.- Buceros
    \item 12.- Ceratogymna
    \item 13.- Ocyceros
    \item 14.- Penelopides
    \item 15.- Bucerotinae
    \item 16.- Tockus
    \item 17.- Phalacrocoracidae
    \item 18.- Fregatidae
    \item 19.- Falconiformes
    \item 20.- Accipitridae
    \item 21.- Oceanodroma
    \item 22.- Hydrobatinae
    \item 23.- Hydrobatidae
    \item 24.- Phalacrocorax
    \item 25.- Plotopteridae
    \item 26.- Bucerotiformes
    \item 27.- Limnofregata
    \item 28.- Fregata
    \item 29.- Oceanitinae
    \item 30.- Aviceda
    \item 31.- Dryotriorchis
    \item 32.- Erythrotriorchis
    \item 33.- Anorrhinus
    \item 34.- Fregetta
    \item 35.- Oceanites
    \item 36.- Botaurus
    \item 37.- Anhingidae
    \item 38.- Rhyticeros
    \item 39.- Pelecanus
\end{itemize}
\end{multicols}

     \begin{table}[h]
         \centering
         {
         \scriptsize
         \begin{tabular}{| c | c | c | c || c |}
             \hline
 ~n~ & degree & closeness & betweenness & random-walk \\ \hline
 1 & \{5\} & \{1\} & \{1\} & \{1\} \\ \hline
 2 & \{4, 5\} & \{1, 5\} & \{4, 5\} & \{1, 5\} \\ \hline
 3 & \{4, 5, 23\} & \{4, 5, 23\} & \{4, 5, 23\} & \{4, 5, 23\} \\ \hline
 4 & \{4, 5, 20, 23\} & \{4, 5, 20, 23\} & \{4, 5, 20, 23\} & \{4, 5, 20, 23\} \\ \hline
         \end{tabular}
         }
         \smallskip
         \caption{
             Solutions for group centrality on DBpedia sample 3.
         }
         \label{table:dbpedia3}
     \end{table}


\begin{figure}[!ht]
  \centering
    \includegraphics[width=0.7\textwidth]{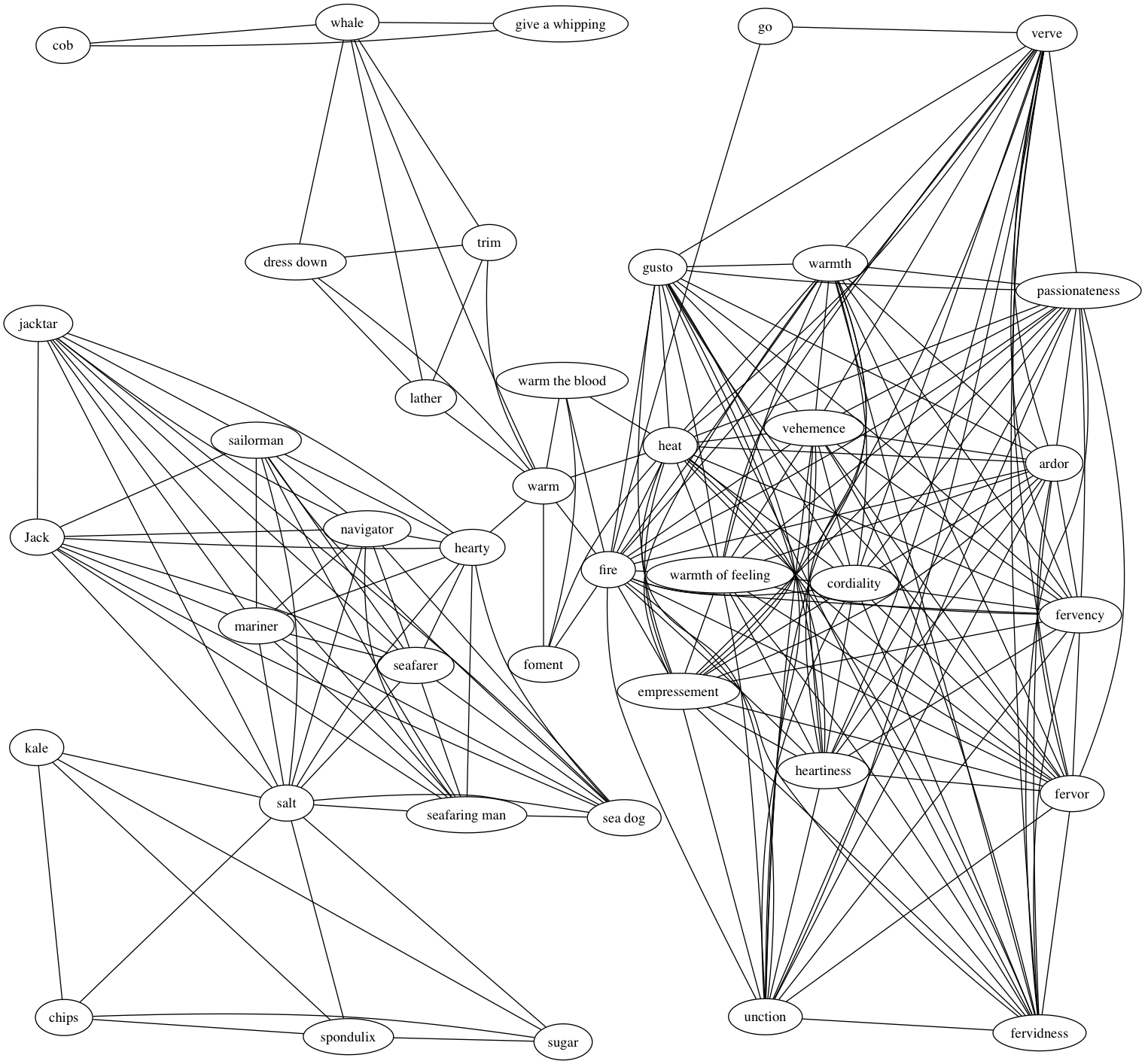}
  \caption{Sample 1 from Roget's Thesaurus network.}
\end{figure}

\newpage
Sample 1 from Roget's Thesaurus network labels:
\begin{multicols}{2}
\begin{itemize}
    \item 0.- Jack
    \item 1.- ardor
    \item 2.- chips
    \item 3.- cob
    \item 4.- cordiality
    \item 5.- dress down
    \item 6.- empressement
    \item 7.- fervency
    \item 8.- fervidness
    \item 9.- fervor
    \item 10.- fire
    \item 11.- foment
    \item 12.- give a whipping
    \item 13.- go
    \item 14.- gusto
    \item 15.- heartiness
    \item 16.- hearty
    \item 17.- heat
    \item 18.- jacktar
    \item 19.- kale
    \item 20.- lather
    \item 21.- mariner
    \item 22.- navigator
    \item 23.- passionateness
    \item 24.- sailorman
    \item 25.- salt
    \item 26.- sea dog
    \item 27.- seafarer
    \item 28.- seafaring man
    \item 29.- spondulix
    \item 30.- sugar
    \item 31.- trim
    \item 32.- unction
    \item 33.- vehemence
    \item 34.- verve
    \item 35.- warm
    \item 36.- warm the blood
    \item 37.- warmth
    \item 38.- warmth of feeling
    \item 39.- whale
\end{itemize}

\end{multicols}

    \begin{table}[h!]
        \centering
        {
        \scriptsize
        \begin{tabular}{| c | c | c | c || c |}
            \hline
~n~ & degree & closeness & betweenness & random-walk \\ \hline
1 & \{10\} & \{35\} & \{35\} & \{10\} \\ \hline
2 & \{10, 25\} & \{10, 25\} & \{25, 35\} & \{10, 25\} \\ \hline
3 & \{10, 25, 39\} & \{10, 25, 39\} & \{10, 25, 35\} & \{10, 25, 39\} \\ \hline
4 & \{10, 17, 25, 39\} , ... (87) & \{10, 17, 25, 39\} , ... (87) & \{10, 17, 25, 35\} & \{10, 17, 25, 39\} \\ \hline
        \end{tabular}
        }
        \smallskip
        \caption{
            Solutions for group centrality on Roget's Thesaurus sample 1.
        }
        \label{table:thesaurus1}
    \end{table}

\newpage
\begin{figure}[!ht]
  \centering
    \includegraphics[width=0.7\textwidth]{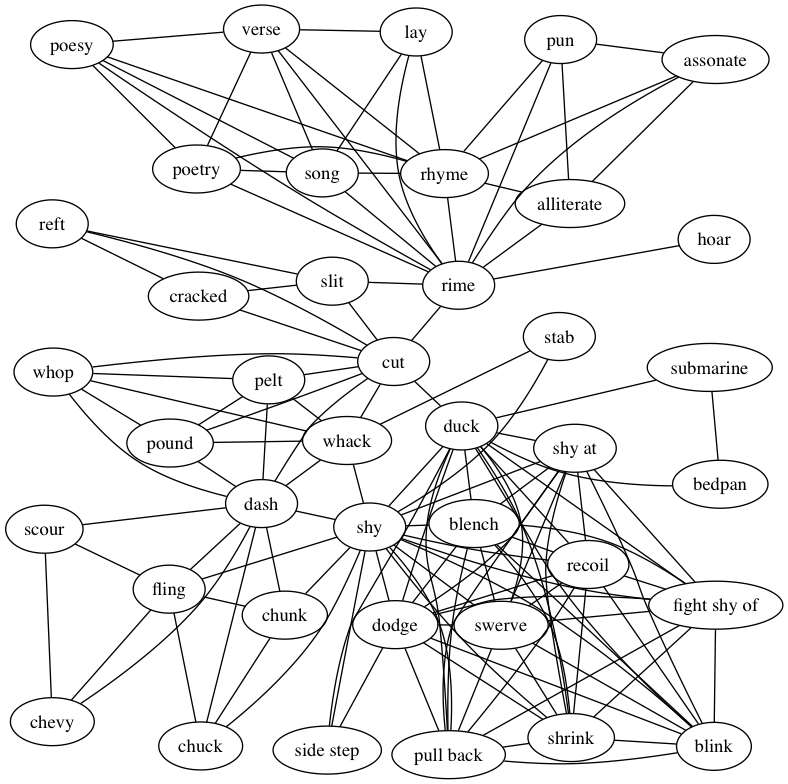}
  \caption{Sample 2 from Roget's Thesaurus network.}
\end{figure}

Sample 2 from Roget's Thesaurus network labels:

\begin{multicols}{2}
\begin{itemize}
    \item 0.- alliterate
    \item 1.- assonate
    \item 2.- bedpan
    \item 3.- blench
    \item 4.- blink
    \item 5.- chevy
    \item 6.- chuck
    \item 7.- chunk
    \item 8.- cracked
    \item 9.- cut
    \item 10.- dash
    \item 11.- dodge
    \item 12.- duck
    \item 13.- fight shy of
    \item 14.- fling
    \item 15.- hoar
    \item 16.- lay
    \item 17.- pelt
    \item 18.- poesy
    \item 19.- poetry
    \item 20.- pound
    \item 21.- pull back
    \item 22.- pun
    \item 23.- recoil
    \item 24.- reft
    \item 25.- rhyme
    \item 26.- rime
    \item 27.- scour
    \item 28.- shrink
    \item 29.- shy
    \item 30.- shy at
    \item 31.- side step
    \item 32.- slit
    \item 33.- song
    \item 34.- stab
    \item 35.- submarine
    \item 36.- swerve
    \item 37.- verse
    \item 38.- whack
    \item 39.- whop
\end{itemize}
\end{multicols}

     \begin{table}[h]
         \centering
         {
         \scriptsize
         \begin{tabular}{| c | c | c | c || c |}
             \hline
 ~n~ & degree & closeness & betweenness & random-walk \\ \hline
 1 & \{29\} & \{9\} & \{9\} & \{29\} \\ \hline
 2 & \{26, 29\} & \{26, 29\} & \{9, 29\} & \{26, 29\} \\ \hline
 3 & \{10, 12, 26\} & \{10, 12, 26\} & \{9, 26, 29\} & \{10, 12, 26\} \\ \hline
 4 & \{10, 12, 26, 32\} , ... (4) & \{10, 12, 24, 26\} , ... (4) & \{10, 12, 26, 29\} & \{10, 12, 26, 29\} \\ \hline
         \end{tabular}
         }
         \smallskip
         \caption{
             Solutions for group centrality on Roget's Thesaurus sample 2.
         }
         \label{table:thesaurus2}
     \end{table}


\begin{figure}[!ht]
  \centering
    \includegraphics[width=0.7\textwidth]{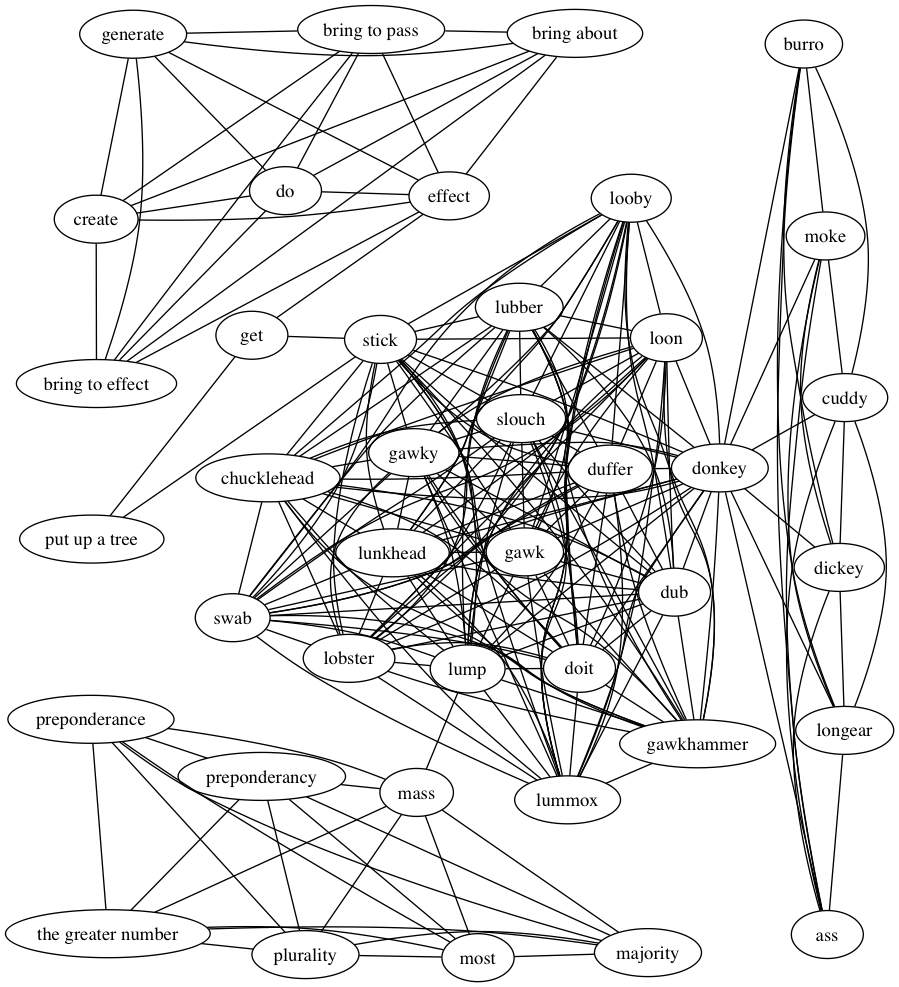}
  \caption{Sample 3 from Roget's Thesaurus network.}
\end{figure}

\newpage
Sample 3 from Roget's Thesaurus network labels:

\begin{multicols}{2}
\begin{itemize}
    \item 0.- ass
    \item 1.- bring about
    \item 2.- bring to effect
    \item 3.- bring to pass
    \item 4.- burro
    \item 5.- chucklehead
    \item 6.- create
    \item 7.- cuddy
    \item 8.- dickey
    \item 9.- do
    \item 10.- doit
    \item 11.- donkey
    \item 12.- dub
    \item 13.- duffer
    \item 14.- effect
    \item 15.- gawk
    \item 16.- gawkhammer
    \item 17.- gawky
    \item 18.- generate
    \item 19.- get
    \item 20.- lobster
    \item 21.- longear
    \item 22.- looby
    \item 23.- loon
    \item 24.- lubber
    \item 25.- lummox
    \item 26.- lump
    \item 27.- lunkhead
    \item 28.- majority
    \item 29.- mass
    \item 30.- moke
    \item 31.- most
    \item 32.- plurality
    \item 33.- preponderance
    \item 34.- preponderancy
    \item 35.- put up a tree
    \item 36.- slouch
    \item 37.- stick
    \item 38.- swab
    \item 39.- the greater number
\end{itemize}
\end{multicols}

    \begin{table}[h!]
        \centering
        {
        \scriptsize
        \begin{tabular}{| c | c | c | c || c |}
            \hline
~n~ & degree & closeness & betweenness & random-walk \\ \hline
1 & \{11\} & \{37\} & \{37\} & \{37\} \\ \hline
2 & \{11, 14\} & \{14, 26\} & \{26, 37\} & \{14, 26\} \\ \hline
3 & \{1, 14, 29\} , ... (6) & \{11, 14, 29\} , ... (6) & \{11, 26, 37\} & \{11, 14, 29\} \\ \hline
4 & \{11, 14, 29, 37\} , ... (440) & \{11, 14, 29, 37\} , ... (440) & \{11, 14, 26, 37\} & \{11, 14, 29, 37\} \\ \hline
        \end{tabular}
        }
        \smallskip
        \caption{
            Solutions for group centrality on Roget's Thesaurus sample 3.
        }
        \label{table:thesaurus3}
    \end{table}

\end{document}